\newtheorem{theorem}{Theorem}
\newtheorem{definition}{Definition}
\patchcmd{\@maketitle}{\LARGE \@title}{\fontsize{16}{19.2}\selectfont\@title}{}{}
\title{Spectral clustering algorithms for the detection of clusters in block-cyclic and block-acyclic graphs}
\author{Hadrien Van Lierde and Tommy W. S. Chow\\
	Department of Electronic Engineering, City University of Hong Kong\\
	83 Tat Chee Av., Kowloon Tong, Hong Kong, China\\
	hvanlierd2-c@my.cityu.edu.hk, eetchow@cityu.edu.hk\\
	\and
	Jean-Charles Delvenne\\
	Department of Mathematical Engineering, Universite Catholique de Louvain,\\
	Avenue Georges Lemaitre 4, B-1348 Louvain-la-Neuve, Belgium\\
	jean-charles.delvenne@uclouvain.be\\
	}
\date{}
\begin{document}
\footnotesize\noindent\textit{This is the unrefereed Author's Original Version of the article. A peer-reviewed version has been accepted for publication in the Journal of Complex Networks published by Oxford University Press. The present version is not the Accepted Manuscript.}
{\let\newpage\relax\maketitle}
\maketitle

\begin{abstract}
We propose two spectral algorithms for partitioning nodes in directed graphs respectively with a cyclic and an acyclic pattern of connection between groups of nodes. Our methods are based on the computation of extremal eigenvalues of the transition matrix associated to the directed graph. The two algorithms outperform state-of-the-art methods for directed graph clustering on synthetic datasets, including methods based on blockmodels, bibliometric symmetrization and random walks. Our algorithms have the same space complexity as classical spectral clustering algorithms for undirected graphs and their time complexity is also linear in the number of edges in the graph. One of our methods is applied to a trophic network based on predator-prey relationships. It successfully extracts common categories of preys and predators encountered in food chains. The same method is also applied to highlight the hierarchical structure of a worldwide network  of Autonomous Systems depicting business agreements between Internet Service Providers.\\

\noindent\textbf{keywords:} Complex networks, spectral clustering, cyclic graph, acyclic graph, stochastic blockmodel, directed graph.
\end{abstract}

\section{Introduction}
\label{sec:introduction}
The past years have witnessed the emergence of large networks in various disciplines including social science, biology, and neuroscience. These networks model pairwise relationships between entities such as predator-prey relationships in trophic networks, friendship in social networks, etc. These structures are usually represented as graphs where pairwise relationships are encoded as edges connecting vertices in the graph. When the relationships between entities are not bidirectional, the resulting graph is directed. Some directed networks in real-world applications have a block-acyclic structure: nodes can be partitioned into groups of nodes such that the connections between groups form an acyclic pattern as depicted in figure \ref{figtest1}. Such patterns are encountered in networks that tend to have a hierarchical structure such as trophic networks modelling predator-prey relationships \cite{Elton1927} or networks of Autonomous Systems where edges denote money transfers between Internet Service Providers \cite{Center2013}. On the other hand, one may encounter directed graphs with a block-cyclic structure (figure \ref{figtest4}) when the network models a cyclic phenomenon such as the carbon cycle \cite{Post1990}. These two patterns are intimately related as the removal of a few edges from a block-cyclic graph makes it block-acyclic. This relationship is also observed in real-world networks: a graph of predator-prey interactions can be viewed as an acyclic version of the carbon cycle. In this paper, we take advantage of this connection between the two types of patterns and formulate two closely related algorithms for the detection of groups of nodes respectively in block-acyclic and block-cyclic graphs in the presence of slight perturbations.

The partitioning of nodes in block-acyclic and block-cyclic networks can be viewed as a clustering problem. In graph mining, clustering refers to the task of grouping nodes that are similar in some sense. The resulting groups are called clusters. In the case of directed graphs, the definition of similarity between two nodes may take the directionality of edges incident to these nodes into account. Clustering algorithms  taking the directionality of edges into account may be referred to as \textit{pattern-based clustering} algorithms which extract \textit{pattern-based clusters} \cite{Malliaros2013}: such methods produce a result in which nodes within the same cluster have similar connections with other clusters. Groups of nodes in block-acyclic and block-cyclic graphs are examples of pattern-based clusters.

Several approaches were proposed for the detection of pattern-based clusters in directed graphs \cite{Malliaros2013}. Popular families of methods for the detection of pattern-based clusters are random walk based algorithms, blockmodels and more specifically stochastic blockmodels and bibliometric symmetrization. Random walk based models are usually meant to detect density-based clusters \cite{Chung2005}, however by defining a two step random walk as suggested in \cite{Huang2006} pattern-based clusters such as blocks in block-cyclic graphs can also be detected. But, the success of this method is guaranteed only when the graph is strongly connected and the result is hazardous when the graph is sparse, with a high number of nodes with zero inner or outer degree. Models based on a blockmodelling approach \cite{Reichardt2007} are based on the definition of an image graph representing connections between blocks of nodes in a graph and the block membership is selected so that the corresponding image graph is consistent with the edges of the original graph. However, in existing algorithms the optimization process relies, for instance, on simulated annealing, hence the computational cost is high and there is a risk of falling into a local optimum. Moreover, this method may also fail when the graph is sparse. Clustering algorithms based on stochastic blockmodels detect clusters of nodes that are stochastically equivalent. In particular the method proposed in \cite{Sussman2012} estimates the block membership of nodes by defining a vertex embedding based on the extraction of singular vectors of the adjacency matrix which turns to be efficient compared to the common methods based on expectation maximization. However, the assumption of stochastic equivalence implies that the degrees of nodes within clusters exhibit a certain regularity as shown further. Hence, this approach may yield poor results in detecting clusters in real-world block-cyclic and block-acyclic networks. A related category of method is bibliometric symmetrization which defines a node similarity matrix as a weighted sum between the co-coupling matrix $WW^T$ and the co-citation matrix $W^TW$ \cite{Satuluri2011} where $W$ is the adjacency matrix of the graph. However it may also fail when the degrees of nodes are not sufficiently regular within groups. To relax this assumption, degree corrected versions with variables representing the degrees of nodes were proposed \cite{Karrer2011,Ramasco2008}. But fitting these models relies on costly methods that do not eliminate the risk of falling into a local optimum (simulated annealing, local heuristics, etc.) \cite{Peixoto2014}. Hence methods based on random walks, bibliometric symmetrization, blockmodels with or without degree correction, may yield poor results in the detection of blocks of nodes in block-cyclic and block-acyclic graphs due to assumptions of connectivity or regularity or due to the computational difficulty of solving the associated optimization problems. The methods described in this paper partly alleviate these weaknesses.

In this paper, we present two new clustering algorithms that extract clusters in block-cyclic and block-acyclic graphs. The first algorithm, called \textit{Block-Cyclic Spectral} (BCS) clustering algorithm is designed for the detection of clusters in block-cyclic graphs. The second algorithm, referred to as \textit{Block-Acyclic Spectral} (BAS) clustering algorithm is a slight extension of the first one that is able to detect clusters in block-acyclic graphs. We apply the second algorithm to two real-world datasets: a trophic network in which the traditional classification of agents in an ecosystem is detected, from producers to top-level predators, and a worldwide network of Autonomous Systems depicting money transfers between Internet Service Providers. When tested on synthetic datasets, our algorithms produce smaller clustering errors than other state-of-the-art algorithms. Moreover, our methods only involve standard tools of linear algebra which makes them efficient in terms of time and space complexity.

Hence the approach we follow differs from other clustering methods for directed graphs: we restrict ourselves to two patterns of connection (cyclic and acyclic) but we make no assumption of regularity (for instance on the degrees of nodes). Our proposed algorithms are based on the computation of complex eigenvalues and eigenvectors of a non-symmetric graph-related matrix, commonly called the \textit{transition matrix}. The process is similar to the well-known \textit{Spectral Clustering} algorithm for the detection of clusters in undirected graphs which is also based on the computation of eigenvalues and eigenvectors of a graph-related matrix \cite{Von2007}. However, spectral clustering and extensions of spectral clustering to directed graphs are essentially based on the real spectrum of symmetric matrices associated to the graph \cite{Sussman2012,Gleich2006,Pentney2005}. In contrast, our method is based on the complex spectrum of a non-symmetric matrix. Hence it keeps the intrinsically asymmetric information contained in directed graphs while having approximately the same time and space complexity as other spectral clustering algorithms. A paper recently appeared \cite{Klymko2016} that exploits spectral information (in a different way than in the present paper) for solving a related problem, the detection of block-cyclic components in the communities of a graph, with a special focus on the three-block case. In contrast, we focus on networks with a global block-cyclic structure and extend our method for the detection of acyclic structures, which we deem even more relevant than block-cyclicity in practical situations. 
Part of the results presented here were derived in an unpublished technical report \cite{Van2015}. The present paper offers more empirical validation and comparison with state-of-the-art competing techniques.

\begin{figure*}[!t]
\centering
\subfloat[Block-acyclic graph]{\includegraphics[width=2.5in]{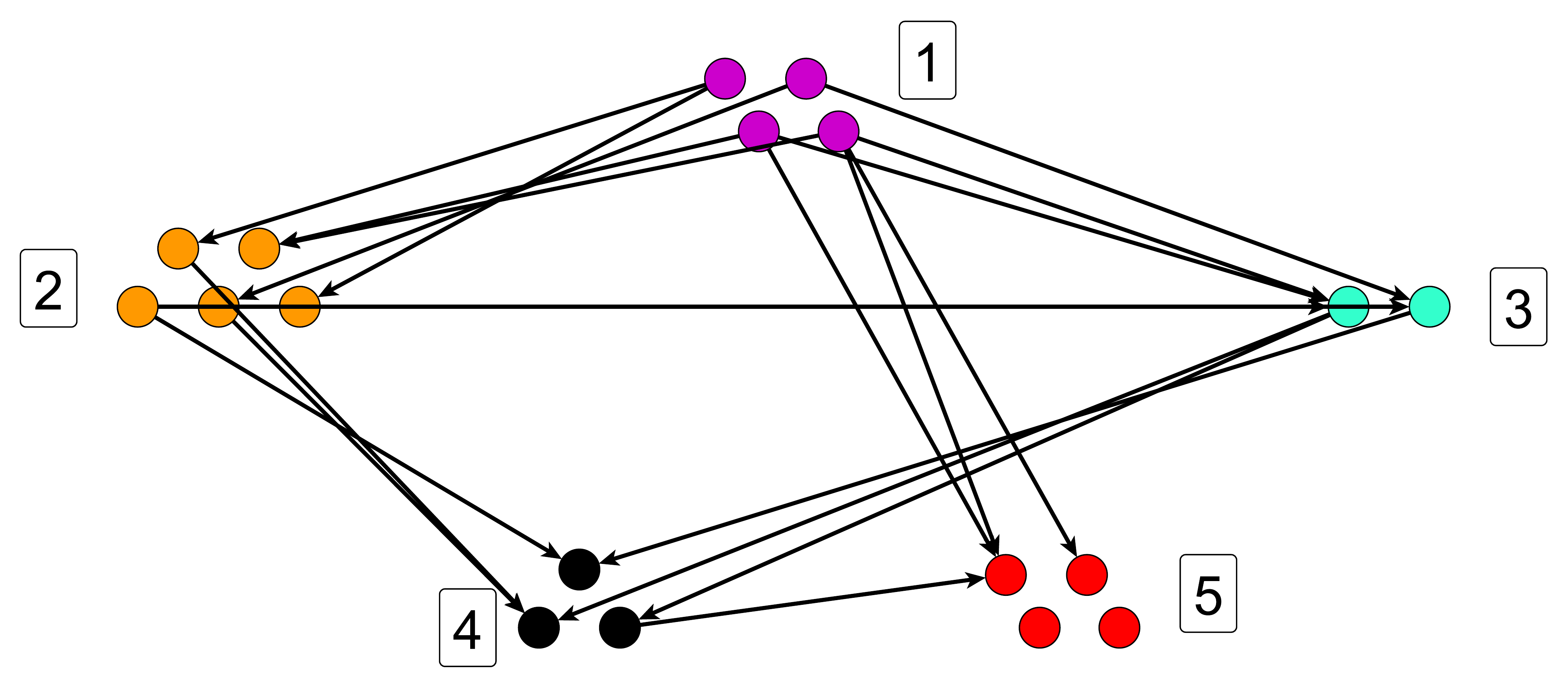}\label{figtest1}}
\hfil
\subfloat[Block-cyclic graph]{\includegraphics[width=2.5in]{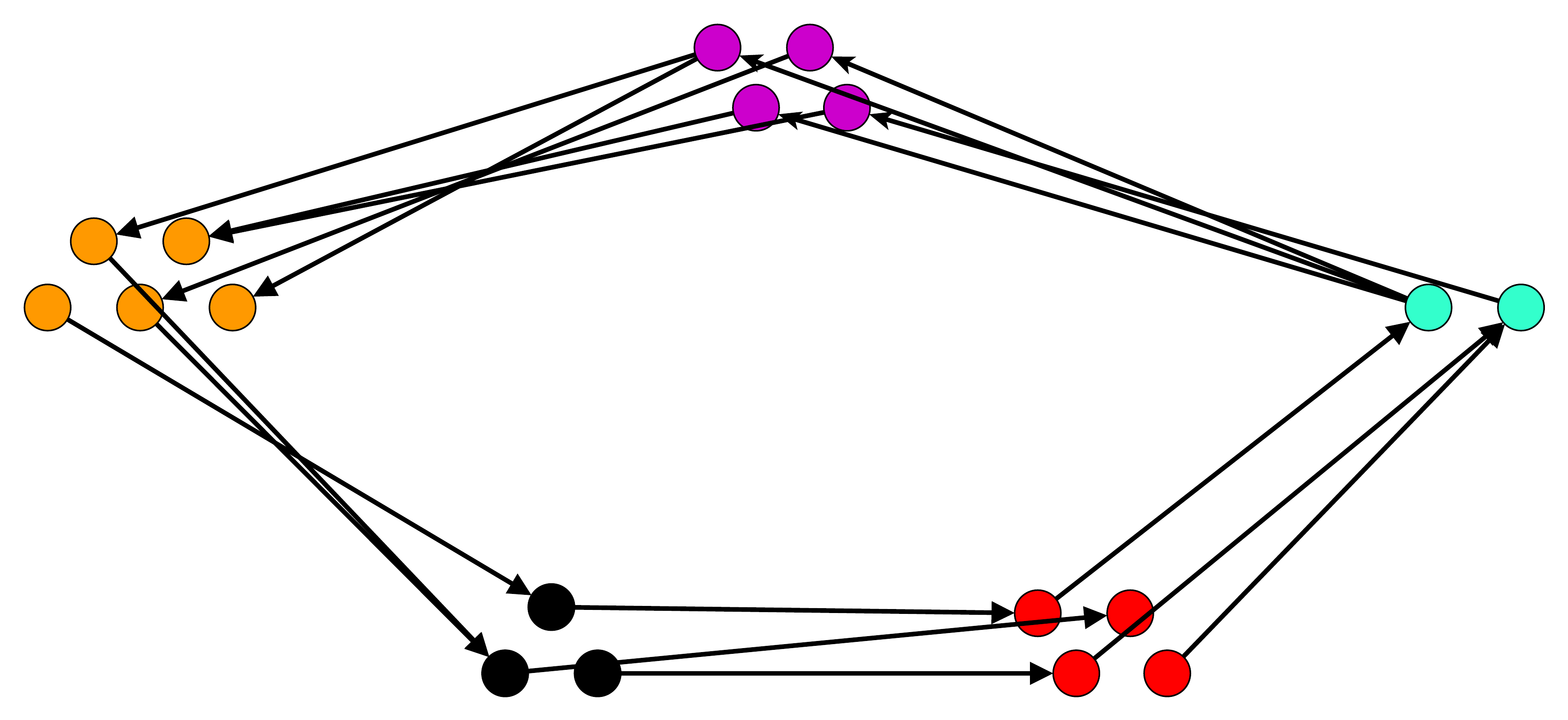}\label{figtest4}}
\caption{Block-acyclic and block-cyclic graphs. Labels of blocks in the block-acyclic graph denote the ranking of blocks (topological order of blocks in the block-acyclic graph).}
\label{fig_sim}
\end{figure*}

The structure of this paper is as follows. In section \ref{sectionRelatedWork}, we describe related clustering methods for directed graphs. In section \ref{BCSsection}, we present our BCS clustering algorithm for the detection of clusters in block-cyclic graphs. Then we describe the links between block-cyclic and block-acyclic graphs in section \ref{nested}. In section \ref{BASsection}, BAS clustering algorithm is introduced for the detection of clusters in block-acyclic graphs. In section \ref{ExperimentSection}, we analyse the performances of BCS and BAS clustering algorithms on synthetic data. Finally, in section \ref{ApplicationSection}, we apply BAS clustering algorithm to a real-world trophic network and a network of Autonomous Systems.

\section{Related work}\label{sectionRelatedWork}
In this section, we present existing algorithms related to our work, including the classical spectral clustering algorithm and some existing algorithms for clustering directed graphs.

\subsection{Spectral clustering of undirected graphs}
Spectral clustering uses eigenvalues and eigenvectors of a graph-related matrix (the \textit{Laplacian}) to detect \textit{density-based clusters} of nodes in an undirected graph, namely clusters with a high number of intra-cluster edges and a low number of inter-cluster edges \cite{Von2007}. The method can be decomposed into two steps. First, the nodes of the graph are mapped to points in a Euclidean space such that nodes that are likely to lie in the same cluster are mapped to points that are close to each other in this projection space. The second step of the method involves clustering the $n$ points in $\mathbb{R}^k$ using k-means algorithm. The algorithm is based on spectral properties of the graph Laplacian $L\in\mathbb{R}^{n\times n}$ defined by
\begin{equation}
L_{uv}=\left\lbrace\begin{array}{ll}
1-\frac{W_{uu}}{d_u}&\text{ if }u=v\text{ and }d_u\neq 0\\
-\frac{W_{uv}}{\sqrt{d_ud_v}}&\text{ if }u\text{ and }v\text{ are adjacent}\\
0&\text{ otherwise}
\end{array}\right.
\end{equation}
where $W$ is the adjacency matrix of the graph and $d_u$ is the degree of node $u$. If the target number of clusters is $k$, extracting the eigenvectors associated to the $k$ smallest eigenvalues of the Laplacian and storing them as the columns of a matrix $U\in\mathbb{R}^{n\times k}$, the embeddings of nodes are given by the $n$ rows of $U$. One can justify this method in the following way. When clusters are disconnected, namely when the graph contains $k$ connected components, the rows of $U$ associated to nodes belonging to the same component are identical \cite{Von2007}. Hence, when perturbing this disconnected graph, namely in the presence of clusters with a low number of inter-cluster edges, nodes within the same clusters are mapped to points that tend to form clusters in $\mathbb{R}^k$. This is explained by the semi-simplicity of eigenvalue $0$ of the graph Laplacian which implies the continuous dependence of associated eigenvectors on the weights of edges in the graph \cite{Von2007}. In sections \ref{BCSsection} and \ref{BASsection}, we show that a similar approach can be used to extract clusters in directed graphs with a cyclic or an acyclic pattern of connection between clusters: we also use the spectrum of a graph-related matrix to map nodes to points in a Euclidean space and cluster these points with k-means algorithm.

\subsection{Clustering algorithms for directed graphs}
In this section, we describe existing clustering algorithms for the detection of pattern-based clusters in directed graphs, namely groups of nodes with similar connections to other groups in some sense. We focus on methods that are theoretically able to extract blocks from block-cyclic and block-acyclic graphs.

Bibliometric symmetrization refers to a symmetrization of the adjacency matrix $W$ of $G$. The symmetrized matrix $(1-\alpha)W^TW+\alpha WW^T$ is defined as the adjacency matrix of an undirected graph $G_u$ for a certain choice of weighing parameter $\alpha$. This symmetric adjacency matrix is a linear combination of the co-coupling matrix $WW^T$ and the co-citation matrix $W^TW$. Then clustering methods for undirected graphs are applied to $G_u$ such as a spectral clustering algorithm. This method is efficient to detect co-citation networks \cite{Satuluri2011}.

The primary focus of random walk based clustering algorithms is the detection of density-based clusters \cite{Chung2005}, namely with a high number of intra-cluster edges and a low-number of inter-cluster edges. A symmetric Laplacian matrix for directed graphs based on the stationary probability distribution of a random walk is defined and applying classical spectral clustering algorithm to this Laplacian matrix leads to the extraction of clusters in which a random walker is likely to be trapped. To detect pattern-based clusters, an extension of this method was proposed in which a random walker alternatively moves forward following the directionality of edges, and backwards, in the opposite direction \cite{Huang2006}. This method successfully extracts clusters in citation-based networks. Similarly, another random walk-based approach extends the definition of directed modularity to extract clusters of densely connected nodes with a cyclic pattern of connections between clusters \cite{Conrad2015,Conrad2016}.

The blockmodelling approach is based on the extraction of functional classes from networks \cite{Reichardt2007}. Each class corresponds to a node in an image graph which describes the functional roles of classes of nodes and the overall pattern of connections between classes of nodes. A measure of how well a given directed graph fits to an image graph is proposed. The optimal partitioning of nodes and image graph are obtained by maximizing this quality measure using alternating optimization combined with simulated annealing.

Methods based on stochastic blockmodels were first defined for undirected networks and then exten-ded to directed graphs in \cite{Wang1987}. A stochastic blockmodel is a model of random graph. For a number $k$ of blocks the parameters of a stochastic blockmodel are a vector of probabilities $\rho\in\{0,1\}^k$ and a matrix $P\in \{0,1\}^{k\times k}$. Each node is randomly assigned to a block with probabilities specified by $\rho$ and the probability of having an edge $(i,j)$ for $i$ in block $s$ and $j$ in block $t$ is $P_{st}$. For this reason, nodes within a block are said to be stochastically equivalent. One of the various methods for the detection of blocks in graphs generated by a stochastic blockmodel is based on the extraction of singular vectors of the adjacency matrix \cite{Sussman2012}, which is similar to the bibliometric symmetrization combined with the classical spectral clustering algorithm. The common definition of the stochastic blockmodel implies that in- and out-degrees of nodes within blocks follow a Poisson binomial distribution \cite{Karrer2011,Le1960} and have thus the same expected value. As this assumption is not verified in most real-world directed networks, \cite{Peixoto2014} proposed a degree-corrected stochastic blockmodel for directed graphs where additional variables are introduced allowing more flexibility in the distribution of degrees of nodes within blocks. The partitioning of nodes is obtained by an expectation maximization process.

Other statistical methods exist among which the so-called clustering algorithm for content-based networks \cite{Ramasco2008}. This method is similar to stochastic blockmodelling but instead of block-to-block probabilities of connection, it is based on node-to-block and block-to-node probabilities. The model parameters are adjusted through an expectation maximization algorithm. This approach can be viewed as another degree-corrected stochastic blockmodel and hence it is robust to high variations in the degrees of nodes but it also involves a more complex optimization approach due to the higher number of parameters.

Finally, some methods are based on the detection of roles in directed networks such as in \cite{Beguerisse2013} which defines the number of paths of given lengths starting or ending in a node as its features from which node similarities are extracted. As we will see, our definition of block-cyclic and block-acyclic graph does not include any constraint on the regularity of node features such as the number of incoming or outgoing paths.

We are interested in the detection of clusters in block-cyclic and block-acyclic graphs. Apart from the role model, the methods described above are all theoretically able to extract such clusters. Methods based on bibliometric symmetrization and stochastic blockmodels are able to detect such structures whenever the assumption of stochastic equivalence between nodes within blocks is verified. Provided that the graph is strongly connected, the method based on two-step random walk can also be used. If degrees of nodes are large enough, the blockmodelling approach is also successful. However, the benchmark tests presented in section \ref{ExperimentSection} show that our algorithms outperform all these methods in the presence of high perturbations or when these assumptions are not fulfilled.

\section{Spectral clustering algorithm for block-cycles}\label{BCSsection}
In this section, we describe a method for the extraction of blocks of nodes in block-cyclic graphs (or block-cycles). We recall that a block-cycle is a directed graph where nodes can be partitioned into non-empty blocks with a cyclic pattern of connections between blocks. We provide a formal definition of block-cycle below. In subsequent sections, we refer to weighted directed graphs as triplet $G=(V,E,W)$ where $V$ is a set of nodes, $E\subseteq V\times V$ is a set of directed edges and $W\in\mathbb{R}^{n\times n}_+$ is a matrix of positive edges weights. When the graph is unweighted, we refer to it as a pair $G=(V,E)$.

\begin{definition}[Block-cycle]
A directed graph $G=(V,E,W)$ is a block-cycle of $k$ blocks if it contains at least one directed cycle of length $k$ and if there exists a function $\tau:V\rightarrow\{1,...,k\}$ partitioning the nodes of $V$ into $k$ non-empty subsets, such that
\begin{equation}
E\subseteq\{(u,v)\text{ : }(\tau(u),\tau(v))\in \mathcal{C}\}
\end{equation}
where $\mathcal{C}=\{(1,2),(2,3),...,(k-1,k),(k,1)\}$.
\end{definition}
Due to the equivalence between the existence of \textit{clusters} in a graph and the \textit{block} structure of the adjacency matrix, we use the terms "cluster" and "block" interchangeably. We also use the terms "block-cycle" and "block-cyclic graph" interchangeably. Figure \ref{figtest4} displays an example of block-cycle. Blocks may contain any number of nodes (other than zero) and there can be any number of edges connecting a pair of consecutive blocks in the cycle. It is worth mentioning that, in the general case, a given block-cycle is unlikely to derive from a stochastic blockmodel; in which nodes within a block are stochastically equivalent \cite{Sussman2012}. Indeed, as mentioned before, stochastic equivalence implies that degrees of nodes within the same block are identically distributed. Our definition does not include such regularity assumption.

The definition implies that any block-cycle is k-partite \cite{Balakrishnan2012}. However, the converse is not true as the definition of block-cycle includes an additional constraint on the directionality of edges. Similarly, one can view a block-cycle as a generalization of bipartite graph.

Up to a permutation of blocks, the adjacency matrix of a block-cycle is a block circulant matrix with nonzero blocks in the upper diagonal and in the bottom-left corner as depicted in figure \ref{figtest2}. Given a perturbed block-cycle, our goal is to recover the partitioning of nodes into blocks, namely to provide an estimation $\tilde{\tau}$ of $\tau$. To detect blocks in a block-cycle, we use the spectrum of a graph-related matrix, the transition matrix $P\in\mathbb{R}^{n\times n}$ associated to the Markov chain based on the graph.
\begin{equation}
P_{ij}=\left\lbrace\begin{array}{ll}
\frac{W_{ij}}{d_i^{out}}&\text{ if }d_i^{out}\neq 0\\
0&\text{ otherwise}
\end{array}\right.
\end{equation}
where $W$ is the weight matrix and $d_i^{out}=\sum_jW_{ij}$ is the out-degree of node $i$. The basic spectral property of the transition matrix is that all its complex eigenvalues lie in the ball $\{x\in\mathbb{C}\text{ : }\Vert x\Vert_2\leq 1\}$ regardless of the associated graph \cite{Leskovec2014}. This property combined with the fact that the transition matrix of a block-cycle is block circulant \cite{Mazancourt1983} makes it possible to prove the following theorem\footnote{The proof of the theorem is provided in appendix.}. We make the assumption that $d_i^{out}>0$ for any node $i$.

\begin{figure*}[!t]
\centering
\subfloat[Adjacency matrix]{\includegraphics[width=2in]{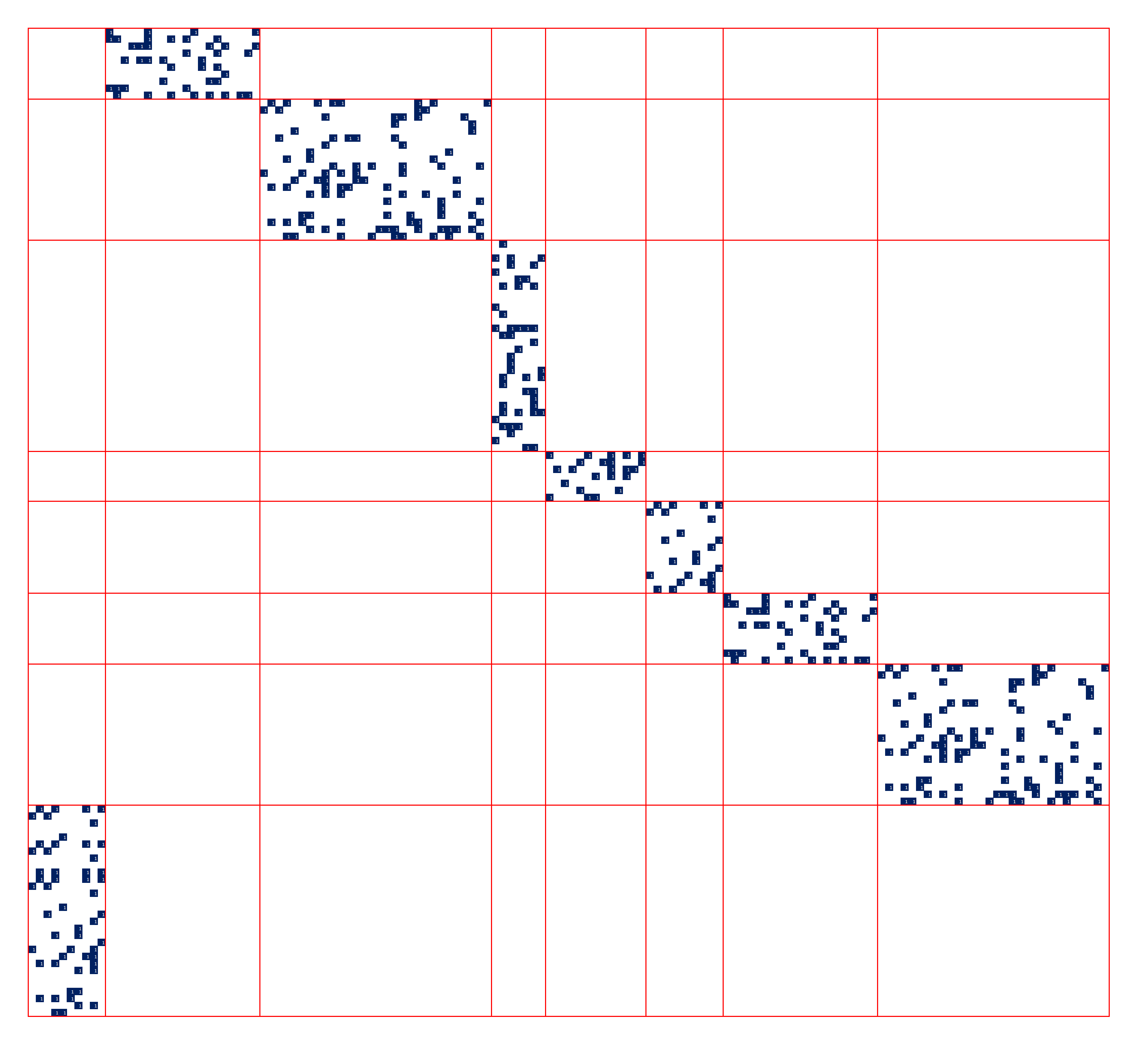}\label{figtest2}}
\hfil
\subfloat[Spectrum of transition matrix]{\includegraphics[width=2.8in]{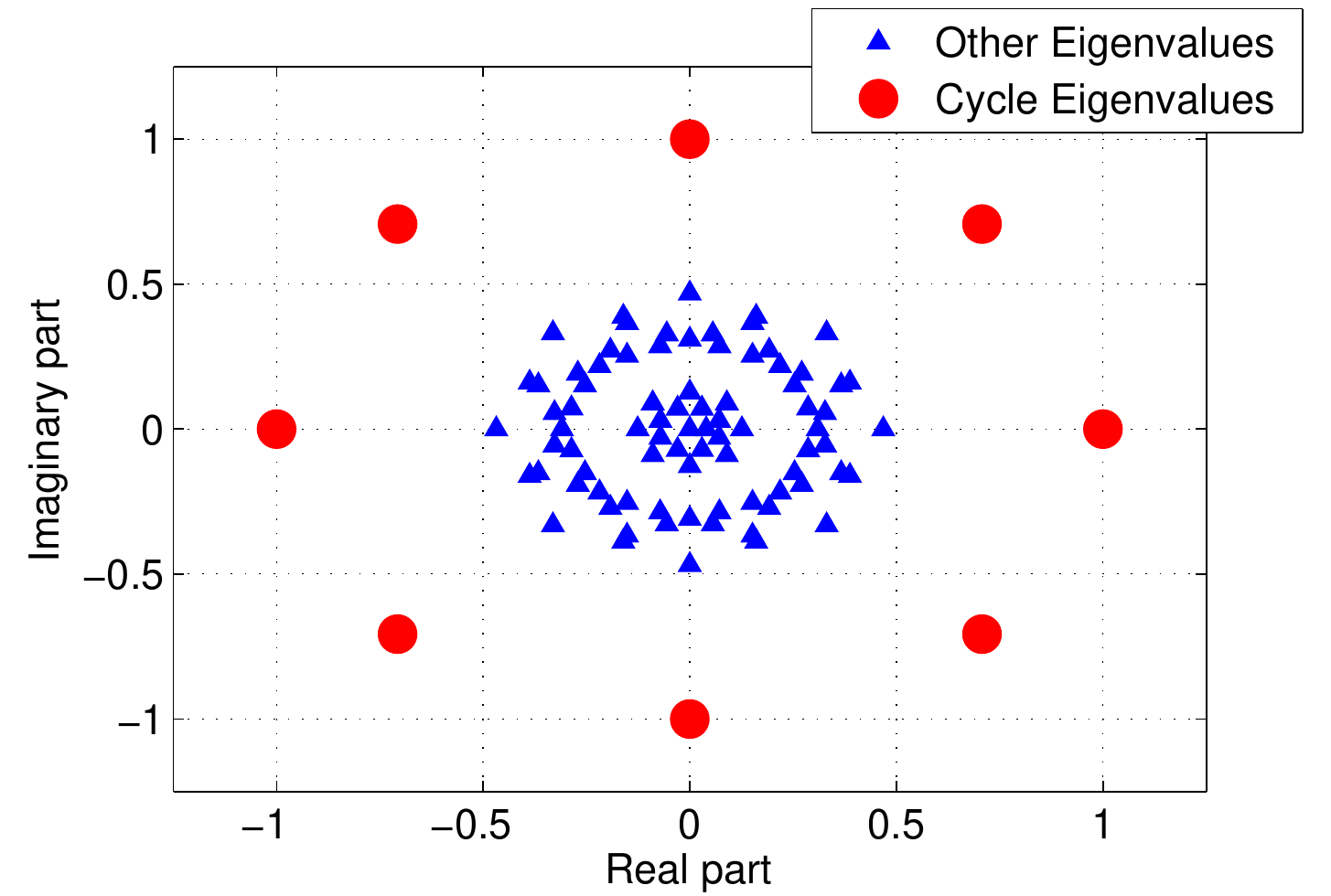}\label{figtest3}}
\caption{Adjacency matrix (left) and complex spectrum of the transition matrix (right) of a block-cycle of $8$ blocks.}
\label{fig_block_cyclic}
\end{figure*}

\begin{theorem}\label{theoremCycle}
Let $G=(V,E,W)$ be a block-cycle with $k$ blocks $V_1,...,V_k$ such that $d_i^{out}>0$ for all $i\in V$. Then $\lambda_l=e^{-2\pi i\frac{l}{k}}\in spec(P)$ for all $0\leq l\leq k-1$, namely there are $k$ eigenvalues located on a circle centered at the origin and with radius $1$ in the complex plane. The eigenvector associated to the eigenvalue $e^{-2\pi i\frac{l}{k}}$ is
\begin{equation}
u^l_j=\left\{
\begin{array}{ll}
e^{2\pi i\frac{lk}{k}} & j\in V_1\\
e^{2\pi i\frac{l(k-1)}{k}} & j\in V_2\\
\vdots & \\
e^{2\pi i\frac{l}{k}} & j\in V_k
\end{array}.\right.
\end{equation} 
Moreover, if $G$ is strongly connected, then the eigenvalues $\lambda_0,...,\lambda_{k-1}$ have multiplicity $1$ and all other eigenvalues of $P$ have a modulus strictly lower than $1$.
\end{theorem}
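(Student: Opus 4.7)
The plan is to split the theorem into two independent parts: first verifying that the claimed $(\lambda_l, u^l)$ are eigenpairs by a direct computation (this uses only the block-cyclic structure and row-stochasticity of $P$), and then arguing, under strong connectivity, that there are no other eigenvalues on the unit circle and that each $\lambda_l$ is simple (this uses Perron--Frobenius theory applied to $P^k$).

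For the first part, I would permute the nodes so that those in $V_1$ come first, then $V_2$, etc. The block-cycle condition then forces $P$ into the block form whose only nonzero blocks are $B_m$, the row-stochastic $|V_m|\times|V_{m+1}|$ matrix of transitions from $V_m$ to $V_{m+1}$ (indices mod $k$), placed in positions $(m, m{+}1)$. Given any vector $u^l$ that is constant on each block with value $c_m^l$ on $V_m$, the assumption $d_i^{\mathrm{out}}>0$ and the row-stochasticity of $B_m$ give $(Pu^l)_i = c_{m+1}^l$ for every $i\in V_m$. Plugging in $c_m^l = e^{2\pi i l(k-m+1)/k}$ yields $(Pu^l)_i = e^{2\pi i l(k-m)/k} = e^{-2\pi i l/k}\cdot c_m^l = \lambda_l\,u^l_i$, establishing the eigenpair. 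This step is essentially a one-line verification once the block structure is written out.

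For the second part, I would study $P^k$, which under the above permutation is block-diagonal with diagonal blocks $Q_m = B_m B_{m+1}\cdots B_{m+k-1}$ (indices mod $k$). Each $Q_m$ is row-stochastic. The standard fact I want to invoke is that if $P$ is irreducible with period $d$, then $P^d$ restricted to each cyclic class is irreducible and aperiodic; here, strong connectivity of $G$ together with the cyclic edge pattern forces the period of $P$ to be exactly $k$, and the cyclic classes are precisely $V_1,\ldots,V_k$. Hence each $Q_m$ is irreducible and aperiodic, so by Perron--Frobenius has $1$ as a simple eigenvalue and every other eigenvalue of modulus strictly less than $1$. Summing across the $k$ diagonal blocks, $P^k$ has $1$ as an eigenvalue of algebraic multiplicity exactly $k$ and all other eigenvalues inside the open unit disc. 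Therefore $P$ has exactly $k$ eigenvalues on the unit circle, and since the $k$ distinct $k$-th roots of unity $\lambda_0,\ldots,\lambda_{k-1}$ were already exhibited as eigenvalues in Part~1, each must be simple, and all remaining eigenvalues satisfy $|\lambda|<1$.

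The routine calculations (verifying $Pu^l = \lambda_l u^l$, and expanding $P^k$) are mechanical. The main obstacle is the clean justification that $P$ is irreducible with period exactly $k$ and that $Q_m$ is therefore irreducible and aperiodic on $V_m$; I would handle this by noting that strong connectivity of $G$ gives irreducibility of $P$, that every directed closed walk in a block-cycle has length a multiple of $k$ (so the period of $P$ divides $k$ and is at least $k$), and that any two nodes $u,v$ in the same $V_m$ are connected by a directed path whose length is a multiple of $k$, which after inflation to any sufficiently large multiple of $k$ yields aperiodicity and irreducibility of $Q_m$ by the standard cyclic-decomposition argument for periodic Markov chains.
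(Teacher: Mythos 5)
Your proof is correct, and its first half is exactly the paper's argument: permute into block form (the paper works entry-wise rather than with the blocks $B_m$, but it is the same one-line verification that a vector constant on blocks is mapped to the shifted constants, using $d_i^{\mathrm{out}}>0$ and row-stochasticity). Where you diverge is the second half. The paper simply cites the Perron--Frobenius theorem for irreducible matrices of period $k$: irreducibility follows from strong connectivity, the period is $k$ by the block-cycle structure, and the theorem then delivers exactly $k$ simple eigenvalues of modulus $1$. You instead re-derive that statement by passing to $P^k$, observing it is block-diagonal with primitive stochastic blocks $Q_m$, applying the aperiodic Perron--Frobenius theorem to each block, and pulling the conclusion back to $P$ via the multiplicity count for $\lambda\mapsto\lambda^k$. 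This is the standard proof of the periodic case from the primitive case, so you have effectively inlined the paper's citation; what it buys you is self-containedness (only the primitive Perron--Frobenius theorem is needed) at the cost of length. One small point to tighten: to conclude that the period \emph{divides} $k$ (rather than merely being a multiple of $k$) you should invoke the clause in the definition of block-cycle guaranteeing the existence of a directed cycle of length exactly $k$, or equivalently combine strong connectivity with two closed walks through a common node whose lengths differ by $k$; "divides $k$ and is at least $k$" as written reverses which half is automatic.
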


We refer to these eigenvalues and eigenvectors as the \textit{cycle eigenvalues} and \textit{eigenvectors}. This spectral property is illustrated in figure \ref{figtest3} displaying the eigenvalues of the transition matrix of a block-cycle of eight blocks. Hence, in order to recover the $k$ blocks of a block-cycle, one may compute the cycle eigenvalues and store the corresponding eigenvectors as the columns of a matrix $U\in\mathbb{C}^{n\times k}$. Rows of $U$ corresponding to nodes within the same block are identical. In the case of an unperturbed block-cycle, Depth First Search is clearly faster at recovering blocks but it might fail in the presence of perturbation. Hence, following an approach similar to classical spectral clustering algorithm, we propose a clustering method based on cycle eigenvalues and eigenvectors that effectively recovers blocks of nodes in a block-cycle in the presence of slight perturbations. In the presence of perturbation, this method extracts eigenvalues that are close to the unperturbed cycle eigenvalues, stores the corresponding eigenvectors as the columns of a matrix $U\in\mathbb{C}^{n\times k}$ and clusters rows of this matrix to recover blocks. Theorem \ref{theoremPerturbation} justifies this approach by quantifying the effect of additive perturbations on cycle eigenvalues and cycle eigenvectors: starting with an unperturbed block-cycle $G$, we consider a graph $\hat{G}$ obtained by appending perturbing edges to $G$ and we analyse its spectral properties\footnote{The proof of the theorem is provided in appendix.}.

\begin{theorem}\label{theoremPerturbation}
Let $G=(V,E,W)$ be a strongly connected block-cycle with $k$ blocks $V_1,...,V_k$ such that $d_i^{out}>0$ for all $i\in V$, let $\lambda_0,...,\lambda_{k-1}$ be the $k$ cycle eigenvalues and $u^0,...,u^{k-1}$ be the corresponding cycle eigenvectors. Let the $\hat{G}=(V,\hat{E},\hat{W})$ be a perturbed version of $G$ formed by appending positively weighted edges to $G$ except self-loops. Let $P$ and $\hat{P}$ denote the transition matrices of $G$ and $\hat{G}$ respectively. We define the quantities
\begin{equation}\label{deltaKetc}
\begin{array}{rcl}
\sigma &=&\underset{(i,j)\in\hat{E}}{\max}\text{ }\frac{\hat{d}_j^{in}}{\hat{d}_i^{out}}\\
\rho &=&\underset{i}{\max}\text{ }\frac{\hat{d}_i^{out}-d_i^{out}}{d_i^{out}}
\end{array}
\end{equation}

where $d^{in}_i$, $d^{out}_i$, $\hat{d}^{in}_i$ and $\hat{d}^{out}_i$ represent the in-degree and out-degree of $i$-th node in $G$ and $\hat{G}$ respectively. Then,
\begin{enumerate}
\item for any cycle eigenvalue $\lambda_l\in spec(P)$, there exists an eigenvalue $\hat{\lambda}_l\in spec(\hat{P})$ so that
\begin{equation}
\left\vert\hat{\lambda}_l-\lambda_l\right\vert \leq \sqrt{2n}\Vert f\Vert_2\sigma^{\frac{1}{2}}\rho^{\frac{1}{2}}+\mathcal{O}\left(\sigma\rho\right)
\end{equation}
where $f$ is the Perron eigenvector of $P$, namely the left eigenvector of the transition matrix of $G$ associated to eigenvalue $1$ with positive entries and $\Vert f\Vert_1=1$,

\item there exists an eigenvector $\hat{u}^l$ of $\hat{P}$ associated to eigenvalue $\hat{\lambda}^l$ verifying
\begin{equation}\label{ineq2}
\Vert\hat{u}^l-u^l\Vert_2\leq \sqrt{2}\Vert (\lambda^lI-P)^{\#}\Vert_2 \sigma^{\frac{1}{2}}\rho^{\frac{1}{2}}+\mathcal{O}\left(\sigma\rho\right)
\end{equation} 
where $u^l$ is the eigenvector of $P$ associated to eigenvalue $\lambda^l$ and $(\lambda^lI-P)^{\#}$ denotes the Drazin generalized inverse of $(\lambda^lI-P)$.
\end{enumerate}
\end{theorem}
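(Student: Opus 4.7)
The plan is to treat the theorem as a first-order spectral perturbation statement. Writing $\hat{P} = P + E$, we first obtain an operator-norm estimate $\|E\|_2 \leq \sqrt{2\sigma\rho} + O(\sigma\rho)$, then combine it with the standard first-order perturbation formulas for isolated simple eigenvalues and their eigenvectors. Theorem \ref{theoremCycle} guarantees that each cycle eigenvalue $\lambda_l$ is simple when $G$ is strongly connected, so the Drazin inverse $(\lambda_l I - P)^{\#}$ coincides with the group inverse and is well defined. The remainders $O(\sigma\rho)$ in both bounds arise as the second-order term $O(\|E\|_2^2)$ of the Taylor expansions.

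The main technical step is the bound on $\|E\|_2$. Since $\hat{G}$ differs from $G$ only by appended edges, the perturbation decomposes as $E = \Delta P - D P$, where $\Delta P_{ij} = \Delta W_{ij}/\hat{d}_i^{\mathrm{out}}$ is supported on the new edges and $D$ is the diagonal matrix with entries $D_{ii} = \Delta d_i^{\mathrm{out}}/\hat{d}_i^{\mathrm{out}}$. Row-sum counting immediately gives $\|E\|_{\infty} \leq 2\rho$, using $\sum_j P_{ij}=1$ and the definition of $\rho$. For the column-sum norm the key identity is
\begin{equation*}
\frac{W'_{ij}}{\hat{d}_i^{\mathrm{out}}} \;=\; \frac{\hat{d}_j^{\mathrm{in}}}{\hat{d}_i^{\mathrm{out}}}\cdot\frac{W'_{ij}}{\hat{d}_j^{\mathrm{in}}} \;\leq\; \sigma\,\frac{W'_{ij}}{\hat{d}_j^{\mathrm{in}}},
\end{equation*}
applied with $W' = \Delta W$ for $\Delta P$ and with $W' = W$ for $D P$; combined with $\sum_i \hat{W}_{ij} = \hat{d}_j^{\mathrm{in}}$ and $\Delta d_i^{\mathrm{out}}/\hat{d}_i^{\mathrm{out}} \leq \rho$, this yields $\|E\|_{1} \leq \sigma(1+\rho)$. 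The classical interpolation $\|E\|_{2} \leq (\|E\|_{\infty}\|E\|_{1})^{1/2}$ then produces the desired estimate.

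For the eigenvalue bound, the first-order formula $\hat{\lambda}_l - \lambda_l = v^l E u^l + O(\|E\|_2^2)$ requires the left eigenvector $v^l$ associated to $\lambda_l$, normalized so that $v^l u^l = 1$. Here the block-cyclic structure allows one to write $v^l$ explicitly in terms of the Perron vector $f$: the Ansatz $v^l_i = \bar{\lambda}_l^{\tau(i)-1} f_i$ satisfies $v^l P = \lambda_l v^l$ because, on each block, $\sum_{i\,:\,\tau(i)=\tau(j)-1} v^l_i P_{ij} = \bar{\lambda}_l^{\tau(j)-2}(fP)_j = \bar{\lambda}_l^{\tau(j)-2} f_j = \lambda_l v^l_j$. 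This choice gives $v^l u^l = \sum_i f_i = 1$ (since $\|f\|_1=1$) and $\|v^l\|_2 = \|f\|_2$ (since $|\bar{\lambda}_l|=1$). Cauchy--Schwarz together with $\|u^l\|_2 = \sqrt{n}$ then yields $|v^l E u^l| \leq \sqrt{n}\,\|f\|_2\,\|E\|_2$, and substituting the bound on $\|E\|_2$ produces the claimed inequality.

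The eigenvector bound is obtained from the analogous first-order identity $\hat{u}^l - u^l = (\lambda_l I - P)^{\#} E u^l + O(\|E\|_2^2)$; because $\lambda_l$ is simple, $(\lambda_l I - P)^{\#}$ annihilates $u^l$, so any additional multiple of $u^l$ is absorbed into a choice of normalization for $\hat{u}^l$. Bounding $\|E u^l\|_2 \leq \|E\|_2 \|u^l\|_2$ and substituting the $\sqrt{2\sigma\rho}$ estimate then closes the argument (up to the convention on the normalization of $u^l$). The principal obstacle throughout is the sharp matrix-norm estimate: the product structure $\sigma^{1/2}\rho^{1/2}$ cannot be read off from any single classical norm of $E$ and emerges only by combining an $\ell^{\infty}$ estimate naturally controlled by $\rho$ with an $\ell^{1}$ estimate naturally controlled by $\sigma$. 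A secondary technical point is the closed-form construction of $v^l$ from $f$, which is made possible entirely by the rigid block-cyclic pattern of the nonzero entries of $P$.
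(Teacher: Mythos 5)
Your proposal is correct and takes essentially the same route as the paper's proof: the same splitting $\hat{P}=P+E$, the same estimate $\Vert E\Vert_2\le(\Vert E\Vert_1\Vert E\Vert_\infty)^{1/2}$ with row sums controlled by $\rho$ and column sums by $\sigma$ (the paper packages this as a Gershgorin bound on $E^TE$), the same explicit left eigenvectors $v^l_i=\bar{\lambda}_l^{\tau(i)-1}f_i$ built from the Perron vector, and the same first-order Stewart--Sun expansions for the simple cycle eigenvalues and eigenvectors. The only cosmetic difference is that the paper keeps the sharper row-sum bound $2\rho/(1+\rho)$, which cancels the factor $(1+\rho)$ in the column-sum bound to give $\Vert E\Vert_2^2\le 2\sigma\rho$ exactly, whereas your looser $2\rho$ leaves a harmless extra factor of $\sqrt{1+\rho}$ on the leading term.
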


We make a few comments about the perturbation bounds above. $f$ is the Perron eigenvector \cite{Seneta2006} with unit $1$-norm. Thus $\frac{1}{\sqrt{n}}\leq \Vert f\Vert_2\leq 1$ and $\Vert f\Vert_2=\frac{1}{\sqrt{n}}$ when it is constant namely when the stationary probability of the random walk associated to $P$ is uniform over vertices. $\rho$ measures the relative discrepancy between out-degrees in the presence and in the absence of perturbation while $\sigma$ measures the discrepancy between in-degree of destination and out-degree of origin for all edges in the perturbed graph. In particular, the perturbation bound is small if the block-cycle has homogeneous degrees and if the perturbation is uniform. However, experiments described in section \ref{ExperimentSection} show that our method is robust to variations in node degrees and non-uniform perturbations. Regarding the perturbation of eigenvectors, inequality \ref{ineq2} follows from the fact that cycle eigenvalues are simple. Providing bounds on the norm of the Drazin inverse of a non-symmetric matrix is tedious since it depends on the condition number of each eigenvalue of the matrix (see for instance \cite{Kirkland2012} for bounds on the norm of $(I-P)^{\#}$ for a stochastic matrix $P$). However, an important factor impacting the norm $\Vert (\lambda_lI-P)^{\#}\Vert_2$ of the Drazin inverse of $(\lambda_lI-P)$ is the inverse of its smallest nonzero eigenvalue\cite{Stewart1990}
\begin{equation}
\left(\underset{\lambda\in \text{spec}(P)\setminus\{\lambda_l\}}{\min}|\lambda-\lambda_l|\right)^{-1}
\end{equation}
namely the inverse of the minimum distance between cycle eigenvalues and other eigenvalues of the transition matrix which is typically greater than $0.1$ in real-world applications presented in section \ref{ApplicationSection}. In general, equation \ref{ineq2} implies that the cycle eigenvectors of a block-cycle vary continuously as functions of the entries of the transition matrix and hence of the edges' weights. Although the bounds provided by theorem \ref{theoremPerturbation} can be quite loose in practice, the continuity of cycle eigenvalues and eigenvectors is verified for any strongly connected block-cycle. This continuity property provides the theoretical foundation of our spectral clustering algorithm. It is worth mentioning that the eigenvectors of interest in classical spectral clustering also vary continuously as functions of entries of the Laplacian matrix of an undirected graph which also provides a theoretical justification for classical spectral clustering \cite{Von2007}.

Given the format of cycle eigenvectors (theorem \ref{theoremCycle}), the bound on the perturbation of cycle eigenvalues and the continuity property of the entries of cycle eigenvectors (theorem \ref{theoremPerturbation}), when extracting eigenvectors associated to the $k$ eigenvalues with largest modulus in a slightly perturbed block-cycle of $k$ blocks, entries of these eigenvectors tend to form clusters corresponding to blocks of nodes. This property was verified experimentally including on block-cycles with heterogeneous degrees (see section \ref{ExperimentSection}). Hence, to recover the partitioning of nodes of a perturbed block-cycle, we may compute the $k$ eigenvalues of the transition matrix with largest modulus, store the corresponding eigenvectors as the columns of a matrix $U\in\mathbb{C}^{n\times k}$ and cluster its rows using k-means algorithm. We define algorithm \ref{algorithmBCS} as the \textit{Block-Cyclic Spectral} (BCS) clustering algorithm.

\begin{algorithm}
\caption{Block-Cyclic Spectral (BCS) clustering algorithm}
~\\
\textbf{Input: } Adjacency matrix $W\in\{0,1\}^{n\times n}$ in which all nodes have nonzero out-degree;\\
\textbf{Parameters: }$k\in\{2,3,...,n\}$;\\
\textit{Step 1:} Compute the transition matrix $P$;\\
\textit{Step 2:} Find the $k$ cycle eigenvalues (the $k$ eigenvalues with largest modulus) and store the associated cycle eigenvectors as the columns of a matrix $\Gamma\in \mathbb{C}^{n\times k}$.;\\
\textit{Step 3:} Consider each row of $\Gamma$ as a point in $\mathbb{C}^k$ and cluster these points using a k-means algorithm. Let $\phi:\{1,...,n\}\rightarrow\{1,...,k\}$ be the function assigning each row of $\Gamma$ to a cluster;\\
\textit{Step 4:} Compute the estimation of block membership function $\tilde{\tau}$: $\tilde{\tau}(u)=\phi(u)$ for all $u\in\{1,...,n\}$;\\
\textbf{Output: }estimation of block membership $\tilde{\tau}$;\\
\label{algorithmBCS}
\end{algorithm}

We now make some observations about BCS clustering algorithm. Step 2 of the algorithm involves the computation of eigenvalues of a non-symmetric matrix. As we seek extremal eigenvalues, this can be done with Arnoldi algorithm \cite{Saad1992}. If $k$ is sufficiently small so that we can neglect the time of computation of the spectrum of a $k\times k$ matrix, the time complexity of Arnoldi algorithm is $\mathcal{O}(k^2|E|)$ where $|E|$ is the number of nonzero elements in matrix $W$ \cite{Lee2009}. The k-means method at step 3 can be Lloyd's algorithm \cite{Lloyd1982} the time complexity of which is $\mathcal{O}(k^2nI)$ where $I$ is the maximum allowed number of iterations. In practice, the convergence of Lloyd's algorithm is fast in comparison with that of Arnoldi algorithm. Hence, the overall complexity of BCS clustering algorithm is $\mathcal{O}(k^2|E|)$ in practice, which is linear in the number of edges in the graph.

Eigenvalue $1$ is among the eigenvalues of largest modulus extracted at step 2 of the algorithm. From theorem \ref{theoremCycle}, the associated eigenvector is constant and does not provide any information about the blocks. Moreover other cycle eigenvalues and associated eigenvectors form complex conjugate pairs hence providing redundant information. In practice, after applying Arnoldi algorithm we may keep eigenvalues with positive imaginary part only and exclude eigenvalue $1$ in order to speed up k-means algorithm in step 3.

From theorem \ref{theoremCycle}, $e^{\frac{2\pi i}{k}}$ is an eigenvalue of the transition matrix of a block-cycle and the components of the associated eigenvector are $u_j=e^{2\pi i\frac{l-1}{k}}\text{, if node }j\text{ is in block }l$. Hence, an alternative to algorithm \ref{algorithmBCS} is to extract this eigenvector only and cluster its components in $\mathbb{C}$ to recover the block membership of nodes. However, algorithm \ref{algorithmBCS} turns to be more robust to perturbations for a similar time complexity when the number $k$ of blocks is small. Indeed, as cycle eigenvalues are extremal eigenvalues, the cost of Arnoldi algorithm does not differ significantly for the extraction of one or all cycle eigenvalues.

We leave $k$ as a parameter of the algorithm. In the trophic network and the network of Autonomous Systems presented in section \ref{ApplicationSection}, the number of blocks is known (based on general knowledge about the networks). But in some cases, the number of blocks may not be known in advance. In the conclusion of this paper, we discuss future works including the development of an automatic method to find the suitable number of blocks.

\section{Spectral properties of nested block-cycles}\label{nested}
In this section we provide an empirical analysis of an extension of block-cycles. The spectral properties of so-called nested block-cycles provide the basis for a clustering algorithm for block-acyclic graphs presented in section \ref{BASsection}. The formal definition of a nested block-cycle is given below.
\begin{definition}[Nested block-cycle]
A directed graph $G=(V,E,W)$ is a nested block-cycle of $k$ blocks if there exists a function $\tau:V\rightarrow\{1,...,k\}$ partitioning the nodes of $V$ into $k$ non-empty blocks, such that
\begin{equation}
E\subseteq\{(u,v)\text{ : }\tau(u)<\tau(v)\text{ or }\tau(u) = k\}.
\end{equation}
\end{definition}
An example of nested block-cycle of four blocks is given in figure \ref{nested_example}. In such graph, the $l$-th block may be connected to blocks $l+1,...,k$ for $l<k$ and the $k$-th block may be connected to all other blocks.

We next provide an empirical analysis of the spectrum of the transition matrix of a nested block-cycle and verify to what extent properties of pure block-cycles are preserved in nested block-cycles. As a nested block-cycle may be aperiodic, its spectrum no longer contains $k$ eigenvalues with unit modulus. However, experiments allowed us to make the following observations: when a nested block-cycle of $k$ blocks contains a block-cycle of $k$ blocks as a subgraph that covers all nodes and that verifies all assumptions of theorem \ref{theoremCycle} then
\begin{enumerate}
\item there are $k$ outlying eigenvalues in the spectrum of the transition matrix of $G$, namely $k$ eigenvalues with a significantly larger modulus compared to other eigenvalues,
\item the corresponding eigenvectors verify the same clustering property as cycle eigenvectors of a block-cycle.
\end{enumerate}
To support these observations, we present the following experimental results. Starting with a pure block-cycle in which nodes are randomly assigned to blocks and the probability of existence of edges in the block-cycle is $0.1$, we randomly append edges satisfying the definition of a nested block-cycle which transforms the block-cycle into a nested block-cycle as the one depicted in figure \ref{nested_example}. Figure \ref{eigNested1} displays the evolution of the spectrum of the transition matrix as edges are appended to a block-cycle of four blocks to create a nested block-cycle (we display all the eigenvalues of all transition matrices of the resulting nested block-cycles in the same figure). Although the spectrum is strongly perturbed compared to a pure block-cycle, there are four outlying eigenvalues regardless of the magnitude of the perturbation which confirms the first claim. By analogy with the block-cyclic case, we refer to these eigenvalues as \textit{cycle eigenvalues}. Figure \ref{NBCpert} shows the proportion of misclassified nodes when applying BCS clustering algorithm to the resulting nested block-cycle as a function of the number of appended edges. We also display the proportion of misclassified nodes when the same number of edges are randomly appended. While the error seems to grow linearly for random perturbations, the error stays close to zero for the nested block-cycle no matter the magnitude of the perturbation which supports the second claim above.

\begin{figure}[!t]
\centering
\includegraphics[width=2.5in]{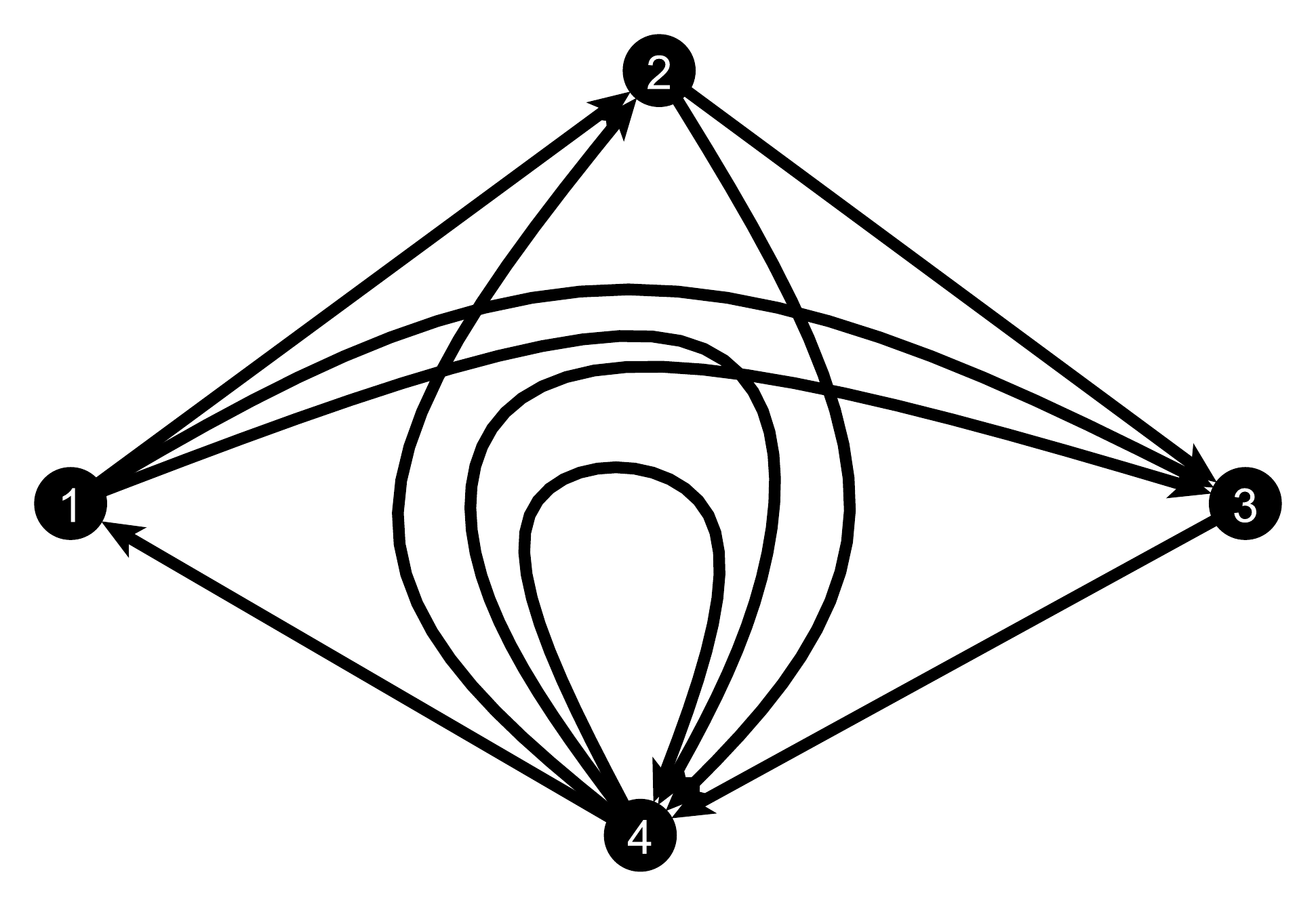}
\caption{Nested block-cycle of four blocks (nodes represent blocks).}
\label{nested_example}
\end{figure}

\begin{figure}[!t]
\centering
\includegraphics[width=2.5in]{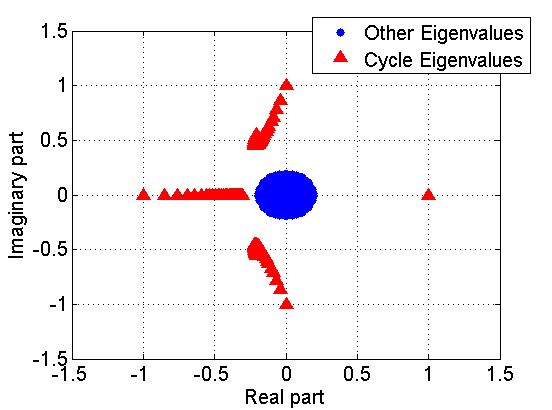}
\caption{Eigenvalues of the transition matrices of nested block-cycles of $4$ blocks obtained by appending edges iteratively to a block-cycle of $4$ blocks. The eigenvalues of all the resulting nested block-cycles are displayed together in the graph above. In each case, there are four eigenvalues with significantly large modulus.}
\label{eigNested1}
\end{figure}

\begin{figure}[!t]
\centering
\includegraphics[width=2.5in]{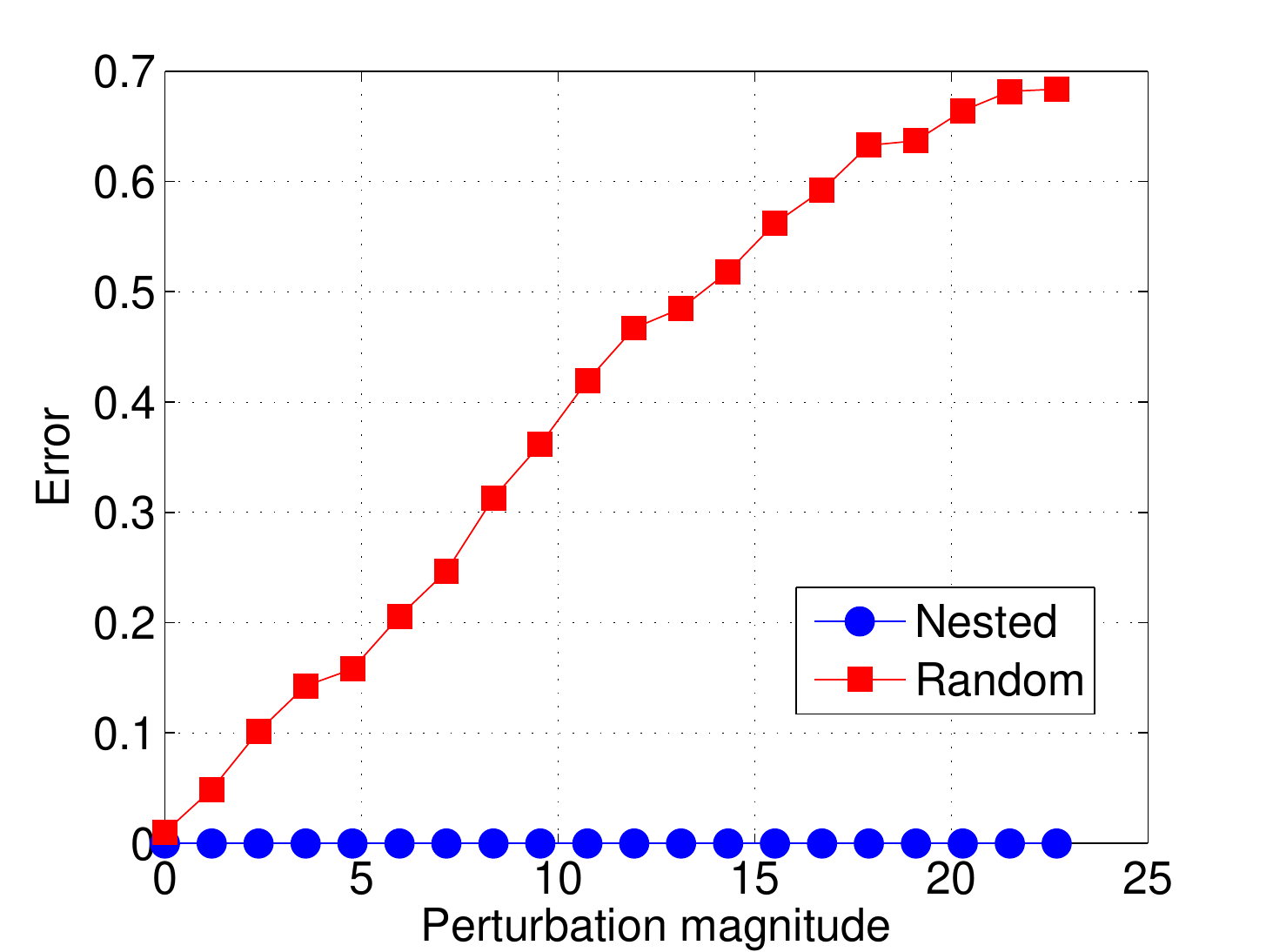}
\caption{Evolution of the proportion of misclassified nodes as a function of the proportion of perturbing edges when applying BCS clustering algorithm respectively to a nested block-cycle and a randomly perturbed block-cycle of $4$ blocks. The proportion of perturbing edges is the ratio between the number of appended edges and the number of edges in the original block-cycle.}
\label{NBCpert}
\end{figure}

\section{Spectral clustering algorithm for block-acyclic graphs}\label{BASsection}

In this section, we describe an algorithm for the extraction of blocks of vertices in block-acyclic graphs. This algorithm referred to as Block-Acyclic Spectral (BAS) clustering algorithm is based on the spectral properties of nested block-cycles described in section \ref{nested} and it is largely similar to BCS clustering algorithm. A block-acyclic graph is a directed graph where vertices can be partitioned into blocks with an acyclic pattern of connections between the blocks. We provide a formal definition of block-acyclic graph below.\\

\begin{definition}[Block-acyclic graph]
A directed graph $G=(V,E,W)$ is a block-acyclic graph of $k$ blocks if there exists a function $\tau:V\rightarrow\{1,...,k\}$ partitioning the nodes of $V$ into $k$ non-empty blocks, such that
\begin{equation}
E\subseteq\{(u,v)\text{ : }\tau(u)<\tau(v)\}.
\end{equation}
\end{definition}

The block membership function $\tau$ can be viewed as a \textit{ranking function} such that any edge of the graph has its origin in a block of strictly lower rank than its destination. Figure \ref{figtest1} displays an example of block-acyclic graph. The adjacency matrix of a block-acyclic graph is strictly block upper triangular as depicted in figure \ref{figtest5}. As in the case of a block-cycle, it is worth mentioning that a block-acyclic graph is unlikely to derive from a stochastic blockmodel as our definition does not include any regularity requirement on the degrees of nodes within blocks.

Our goal is to detect blocks of nodes in block-acyclic graphs. The principle of our method is to append a few edges in order to transform the block-acyclic graph into a nested block-cycle with the same blocks of nodes. We then apply BCS algorithm to this graph to recover the blocks. The following transformation is performed on the block-acyclic graph. If a node is out-isolated (zero out-degree), we artificially add out-edges connecting this node to all other nodes of the graph. In this way, we append edges connecting the block of highest rank back to all other blocks. The resulting graph is a nested block-cycle and, as shown in section \ref{nested}, BCS clustering algorithm is able to recover blocks of nodes in such graph.

Again, we make use of the transition matrix of this modified graph, we denote this matrix by $P_a$.
\begin{equation}
(P_a)_{ij}=\left\{
\begin{array}{ll}
\frac{1}{d_i^{out}}W_{ij} & \text{ if }d_i^{out}>0\\
\frac{1}{n} & \text{ otherwise }
\end{array}.\right.
\end{equation}
We note that this matrix is the transpose of the Google matrix with zero damping factor \cite{Brin2012}. Matrix $P_a$ is the transition matrix of a nested block-cycle. Hence, as mentioned in section \ref{nested}, it has a spectral property similar to the one highlighted in the case of block-cycles: provided that, after appending out-edges to out-isolated nodes, the graph is strongly connected and provided that each node in block $l<k$ is connected to at least one node in block $l+1$, then $k$ eigenvalues of the transition matrix have a modulus significantly larger than the moduli of other complex eigenvalues, as shown in figure \ref{figtest6}. Based on this observation, we formulate BAS clustering algorithm (algorithm \ref{basclusteringalgorithmhe}) similar to BCS clustering algorithm. Both algorithms have approximately the same time and space complexity. As in the case of BCS clustering algorithm we may keep eigenvalues with positive imaginary part only and exclude eigenvalue $1$ at step 2 of BAS clustering algorithm in order to speed up k-means algorithm in step 3.\\

\begin{algorithm}
\caption{Block-Acyclic Spectral (BAS) clustering algorithm}
~\\
\textbf{Input: } Adjacency matrix $W\in\{0,1\}^{n\times n}$;\\
\textbf{Parameters: }$k\in\{2,3,...,n\}$;\\
\textit{Step 1:} Compute transition matrix $P_a$;\\
\textit{Step 2:} Find the $k$ cycle eigenvalues (the $k$ eigenvalues with largest modulus) and store the associated cycle eigenvectors as the columns of a matrix $\Gamma\in \mathbb{C}^{n\times k}$;\\
\textit{Step 3:} Consider each row of $\Gamma$ as a point in $\mathbb{C}^k$ and cluster these points using a k-means algorithm. Let $\phi:\{1,...,n\}\rightarrow\{1,...,k\}$ be the function assigning each row of $\Gamma$ to a cluster;\\
\textit{Step 4:} Compute the estimation of block membership function $\tilde{\tau}$: $\tilde{\tau}(u)=\phi(u)$ for all $u\in\{1,...,n\}$;\\
\textbf{Output: }estimation of block membership $\tilde{\tau}$
\label{basclusteringalgorithmhe}
\end{algorithm}

\begin{figure*}[!t]
\centering
\subfloat[Adjacency matrix]{\includegraphics[width=2in]{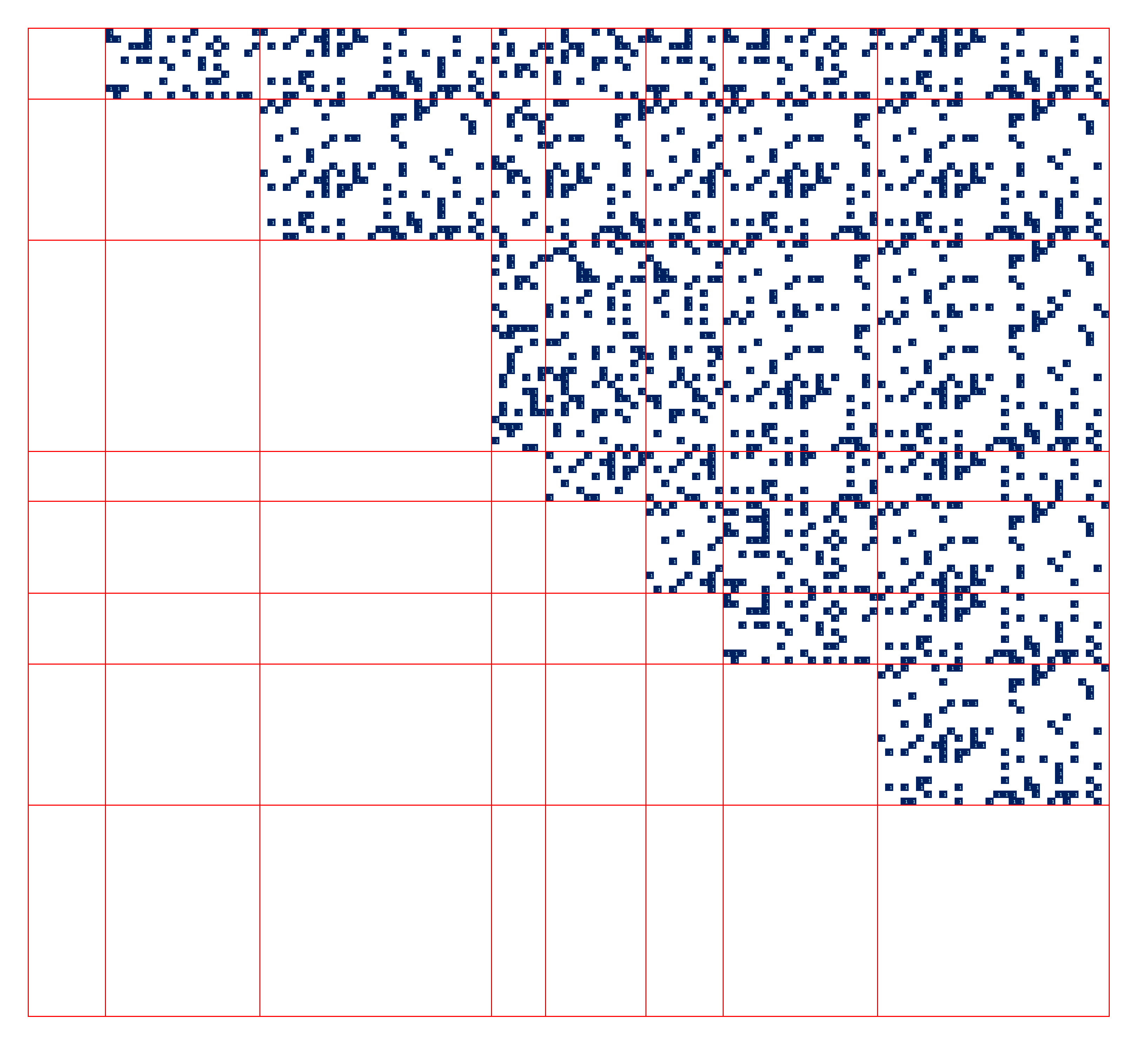}\label{figtest5}}
\hfil
\subfloat[Spectrum of transition matrix]{\includegraphics[width=2.8in]{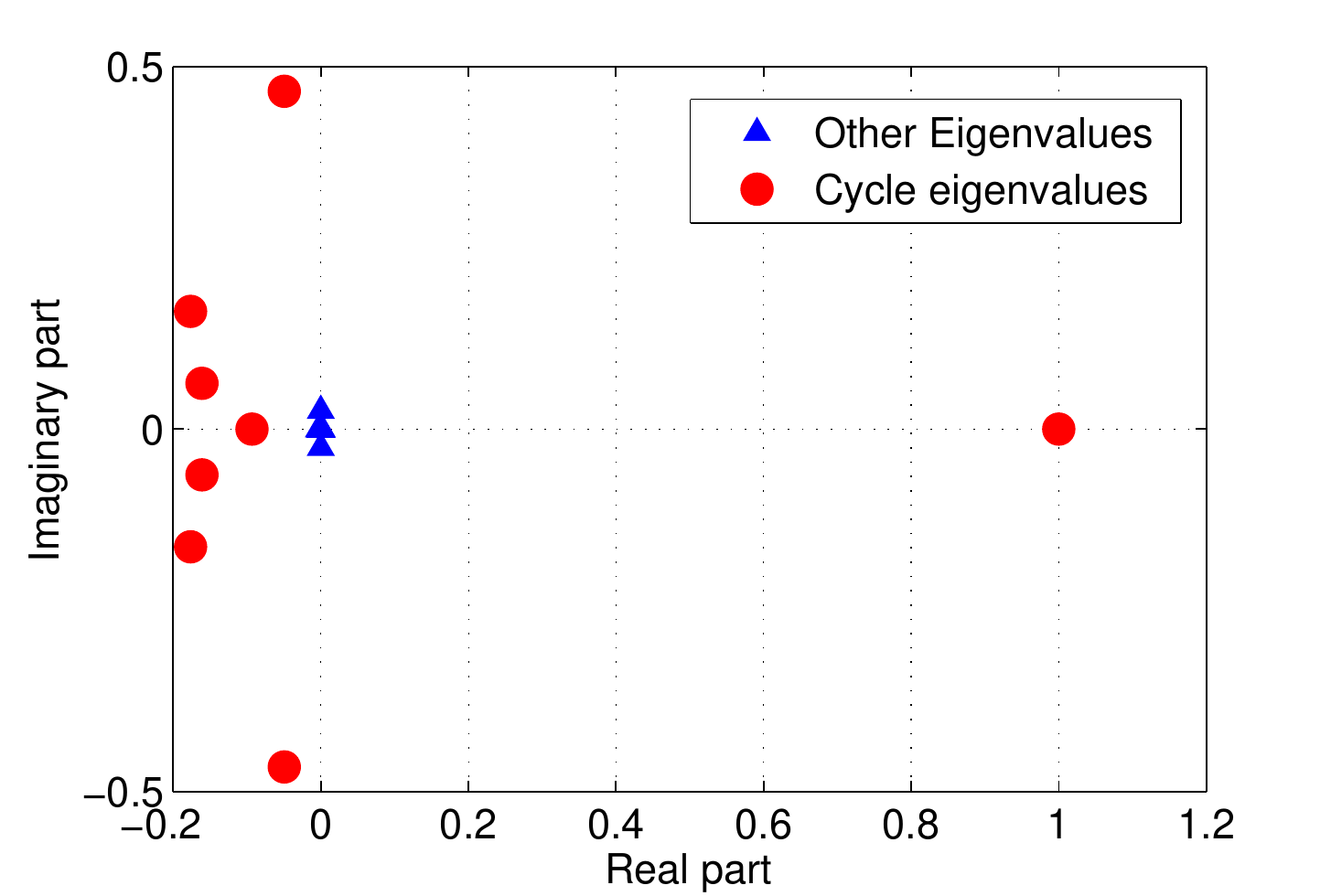}\label{figtest6}}
\caption{Adjacency matrix (left) and complex spectrum of the transition matrix (right) of a block-acyclic graph of $8$ blocks.}
\label{fig_block_acyclic}
\end{figure*}

\section{Experiments}\label{ExperimentSection}

We compare BCS and BAS clustering algorithms to clustering algorithms described previously: bibliometric symmetrization algorithm combined with undirected spectral clustering \cite{Satuluri2011} (BIB), an algorithm based on singular vectors of the adjacency matrix for the estimation of the underlying stochastic blockmodel \cite{Sussman2012} (SVD), an expectation maximization algorithm for degree-corrected stochastic blockmodel \cite{Karrer2011} (DEG), two step random walk algorithm \cite{Huang2006} (RW), Reichardt's method of blockmodelling \cite{Reichardt2007} (REI) and another statistical method called content-based clustering algorithm which is closely related to degree-corrected stochastic blockmodel \cite{Ramasco2008} (CB). Regarding parameter setting, BIB and RW algorithms are both based on slight variations of spectral clustering the main parameter of which is the number $k$ of clusters. For BIB algorithm, we set parameter $\alpha$ to $0.5$ since both co-coupling and co-citation matrices are expected to reflect the block-cyclic (or block-acyclic) shape of the graph. For DEG, REI and CB algorithms, the number of clusters is also provided as an input. Clusters are randomly initialized and an alternating minimization algorithm is applied to optimize an objective function (likelihood for DEG and CB and quality criterion for REI) until no further improvement is achieved. Finally, SVD clustering algorithm relies on the computation of top $d$ right and left singular vectors of the adjacency matrix and the application of k-means algorithm to the components of these vectors. More details about the choice of the target dimension $d$ will be given further. We only consider unweighted graphs but our observations are valid with positive edge weights.

\subsection{Benchmark model}

The graphs used for testing our algorithms are based on stochastic blockmodels. We conduct two experiments both on BCS and BAS clustering algorithms. For the first experiment, two stochastic blockmodels are defined to generate two graphs: a block-cycle and a perturbing graph. Edges from both graphs are combined to form a perturbed block-cycle. As said earlier in this paper, a stochastic blockmodel is a model of random graph in which nodes within blocks are stochastically equivalent \cite{Sussman2012}. We formulate the following mathematical definition of stochastic blockmodel.

\begin{definition}[Stochastic Blockmodel]
Given positive integers $n$ and $k\leq n$ and parameters $\rho\in [0,1]^k$ and $P\in [0,1]^{k\times k}$, an unweighted random graph $G=(V,E)$ of $n$ nodes is generated by the stochastic blockmodel $SBM(k,\rho,P)$ with block membership function $\tau:\{1,...,n\}\rightarrow \{1,...,k\}$ if
\begin{itemize}
\item $P[\tau(u)=i]=\rho_i$ $\forall u\in V$, $i\in\{1,...,k\}$,
\item $P[(u,v)\in E|\tau(u)=s,\tau(v)=t]=P_{st}$ $\forall u,v\in V$.
\end{itemize}
\end{definition}
When considering a graph $G=(V,E)$ generated by the stochastic blockmodel $SBM(k,\rho,P)$ with random block membership function $\tau$, we use the notation $[G,\tau]\sim SBM(k,\rho,P)$. The graphs used in our first experiment are based on the combination of two random graphs: an unperturbed graph $G=(V,E)$ with block membership function $\tau$ such that $[G,\tau]\sim SBM(k,\rho,P)$ where parameter $P$ is chosen so that $G$ is either a block-cycle or a block-acyclic graph and a perturbing graph $\tilde{G}=(V,\tilde{E})$ with block membership function $\tilde{\tau}$ such that $[\tilde{G},\tilde{\tau}]\sim SBM(\tilde{k},\tilde{\rho},\tilde{P})$. We combine $G$ and $\tilde{G}$ to created a perturbed graph $H=(V,E\bigcup \tilde{E})$ and apply BCS or BAS clustering algorithm to provide an estimation $\eta$ of the block membership function $\tau$. Parameter $\tilde{P}$ determines the distribution of perturbing edges in graph $H$. We combine two stochastic blockmodels in such way to take into account the fact that in real-world applications, several complex models and phenomena may influence the shape of networks. This first test applied to both BCS and BAS clustering algorithms is intended to show how our algorithms perform on a standard network model in the presence of perturbation. Our second experiment highlights a case in which our algorithms are successful while other algorithms produce very poor results. We generate two different graphs based on the same stochastic blockmodel $[G_1=(V_1,E_1),\tau_1]\sim SBM(k,\rho,P)$ and $[G_2=(V_2,E_2),\tau_2]\sim SBM(k,\rho,P)$ where $P$ is chosen so that the graphs generated are both block-cyclic or both block-acyclic. Then we combine both graphs by defining $G=(V_1\bigcup V_2,E_1\bigcup E_2\bigcup E_{c})$ where edges in $E_{c}$ are randomly selected according to the model
\begin{equation}\label{modeladv}
P[(u,v)\in E_c]=\left\lbrace
\begin{array}{ll}
\alpha P_{\tau_1(u),\tau_2(v)} & \text{ if }u\in V_1\text{, }v\in V_2\\
\alpha P_{\tau_2(u),\tau_1(v)} & \text{ if }u\in V_2\text{, }v\in V_1\\
0 & \text{ otherwise }
\end{array}
\right.
\end{equation}
where $\alpha\in [0,1]$. If $\alpha$ is equal to $1$, then model (\ref{modeladv}) corresponds to a standard stochastic blockmodel. If $\alpha<1$, the block shape of the adjacency matrix is preserved, but nodes within blocks are not stochastically equivalent. This model illustrates the case where, although the graph might have a block-cyclic or a block-acyclic shape, nodes within a block have a highly different distribution of edges towards  other blocks and hence are not stochastically equivalent. This type of framework is also observed in real-world datasets such as trophic networks modelling predator-prey relationships. For instance, two top-level predators are both at the top of the food chain but they might have highly different diets and have different preys at different levels of the food chain.

Finally, we use the two following error measures to assess the quality of the result \cite{Sussman2012}.
\begin{definition}[Block membership error]
Given a graph $G=(V,E)$ with a certain block membership function $\tau:V\rightarrow\{1,...,k\}$ and a perturbed version $H=(V,E_H)$ of graph $G$, if $\eta:V\rightarrow\{1,...,k\}$ is an estimation of the block membership function $\tau$ based on $H$, then the block membership error is
\begin{equation}\frac{1}{|V|}\underset{\pi\in\Pi(k)}{\min}|\{u\in V\text{ : }\tau(u)\neq \pi(\eta(u))\}|
\end{equation}
where $\Pi(k)$ is the set of permutations on $\{1,...,k\}$.
\end{definition}
In other words, the block membership error computes the minimum number of differences in the entries of $\tau$ and $\eta$ when considering all permutations of block labels $\{1,...,k\}$. Computation of the block membership error can be formulated as a minimum matching problem which can be solved by Hungarian algorithm \cite{Kuhn1955}. We also compute an error based on the \textit{normalized mutual information} \cite{Knops2006} defined below. The error measure associated to the NMI is $1-NMI(\Omega,C)$.
\begin{definition}[Normalized Mutual Information]
Given ground-truth clusters $C=\{c_1,...,c_k\}$ and estimations $\Omega=\{\omega_1,...,\omega_k\}$, the normalized mutual information is defined as

\begin{equation}
NMI(\Omega,C)=\frac{2I(\Omega,C)}{H(\Omega)+H(C)}
\end{equation}
where $I(\Omega,C)$ is the mutual information between $\Omega$ and $C$ and $H(\Omega)$ and $H(C)$ are the entropies respectively of $\Omega$ and $C$.
\end{definition}

\begin{figure*}[!t]
\centering
\subfloat[Block-membership error]{\includegraphics[width=2.5in]{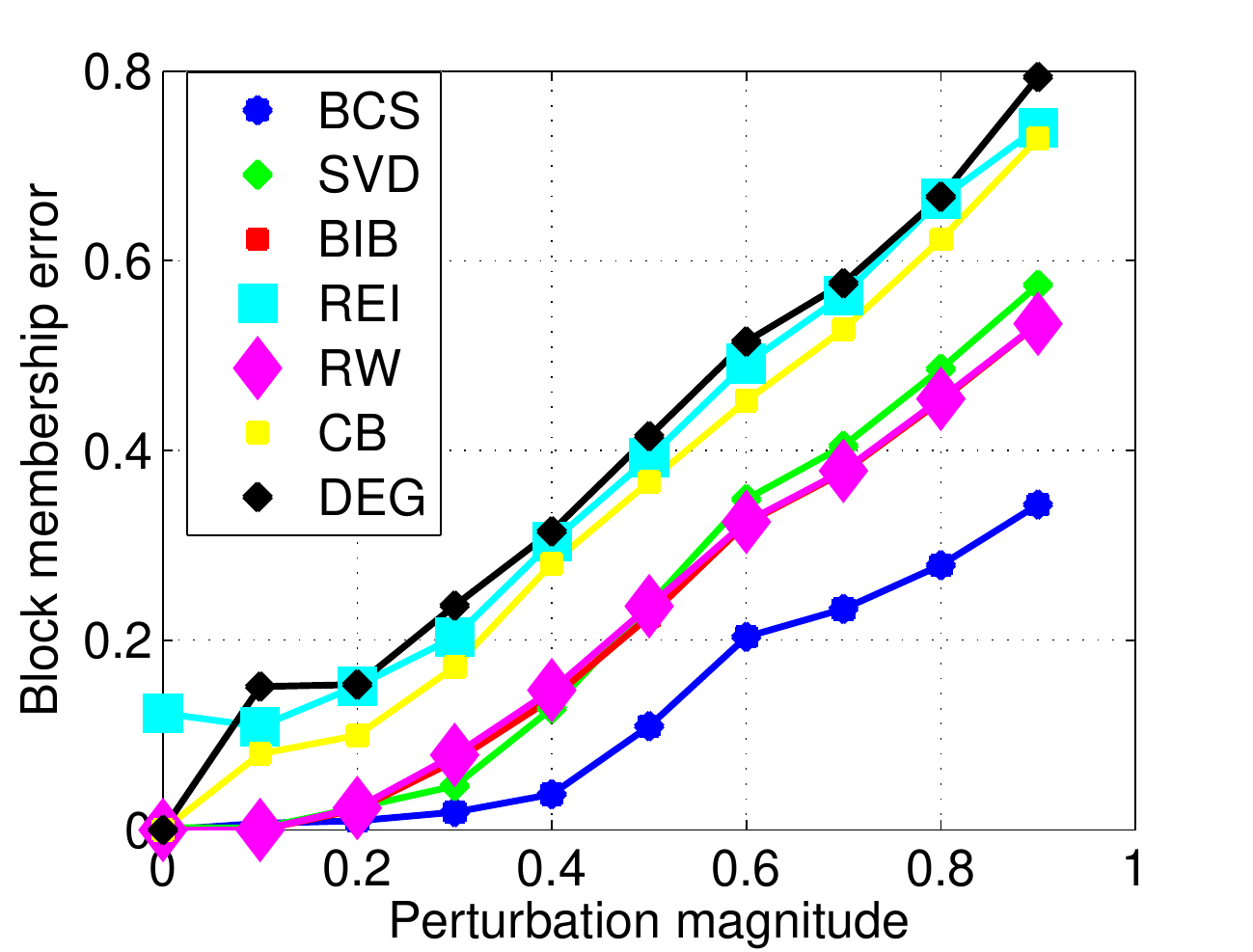}
\label{perturbedBlockCycleBME}}
\hfil
\subfloat[1-NMI]{\includegraphics[width=2.5in]{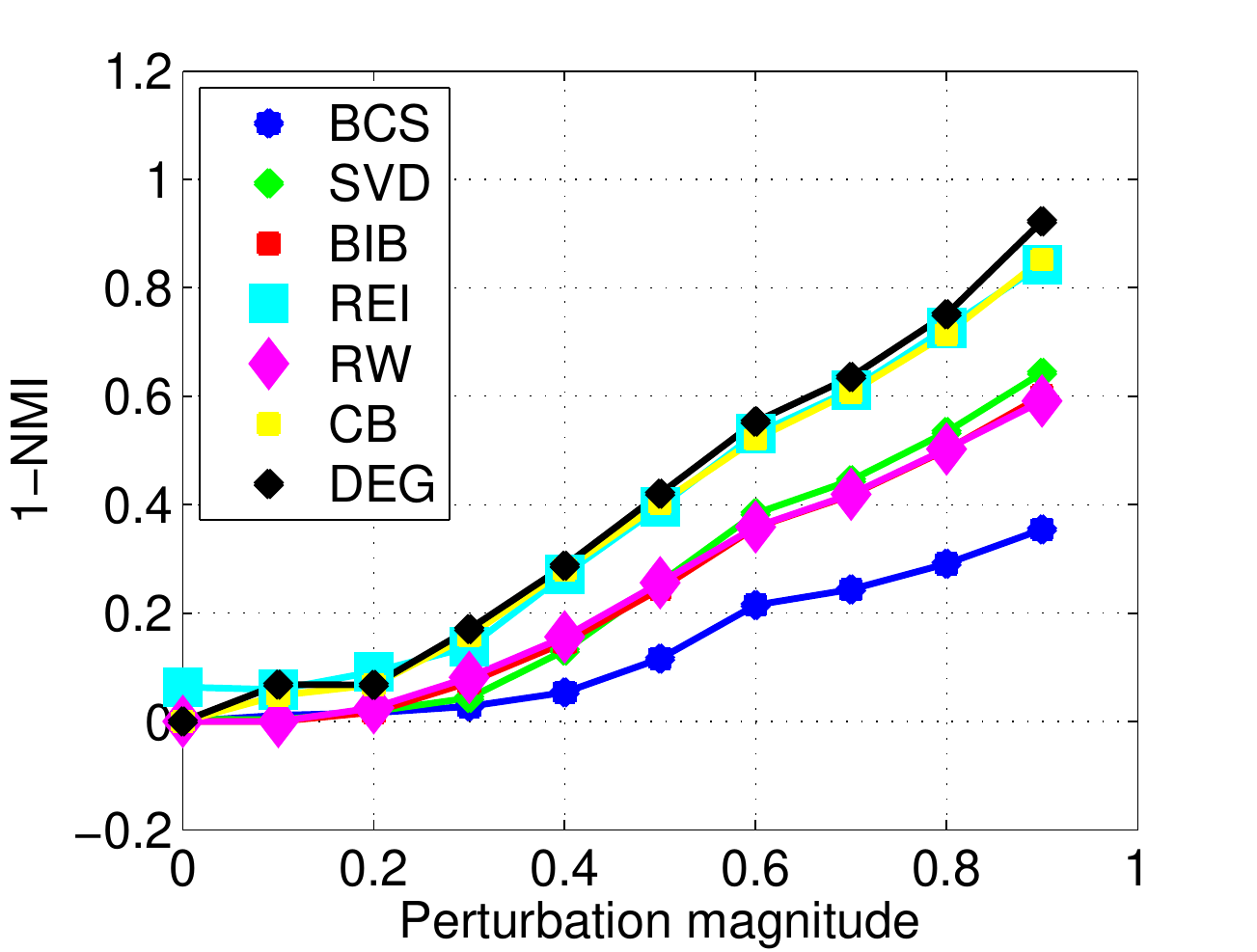}
\label{perturbedBlockCycleNMI}}
\caption{Comparison of the block membership errors (left) and errors based on NMI (right) of our BCS algorithm and SVD, BIB, RW, REI, CB and DEG clustering algorithms as a function of the magnitude of the perturbation (parameter $\epsilon$) on a block-cyclic graph of $n=1000$ nodes.}
\label{perturbedBlockCycle}
\end{figure*}

\subsection{Experiments for BCS clustering algorithm}
For our first experiment, we generate an unperturbed block-cycle $G$ with block membership function $\tau$ such that $[G,\tau]\sim SBM(k,\rho,P)$ with $k=8$ and
\begin{equation}
\begin{array}{rcl}
\rho &=& [0.18,0.2,0.05,0.2,0.14,0.04,0.07,0.13]\\
P_{st}&=&\left\lbrace\begin{array}{ll}
0.7 & \text{ if }(s,t)\in\{(1,2),(2,3),...,(7,8),(8,1)\}\\
0 & \text{ otherwise }
\end{array}\right.
\end{array}.
\end{equation}

where entries of $\rho$ were chosen randomly for obtaining unbalanced blocks of nodes. We generate a perturbing graph $\tilde{G}$ with block membership function $\tilde{\tau}$ such that $[\tilde{G},\tilde{\tau}]\sim SBM(k,\rho,\tilde{P})$ with parameter $\tilde{P}$ of the form $\tilde{P}=\epsilon Q$ where $\epsilon\in [0,1]$ controls the magnitude of the perturbation and $Q\in [0,1]^{k\times k}$ with random entries in $[0,1]$. We compare the partitioning computed by BCS clustering algorithm to the one returned by all other algorithms that were mentioned. Regarding SVD clustering algorithm, the target dimension $d$ is chosen as the rank of matrix $P$ ($k$ in this case) in accordance with the original form of SVD clustering algorithm \cite{Sussman2012}. Figure \ref{perturbedBlockCycle} shows the evolution of the block membership error and the NMI-based error as a function of the perturbation magnitude $\epsilon$ for $n=1000$ nodes. As $\epsilon=1$ would completely hide the block-cyclic structure of the graph, we restrict ourselves to $\epsilon\in [0,0.9]$. Firstly, we observe that BCS clustering algorithm produces an error that is lower than errors obtained with other algorithms, both in terms of block membership error and NMI. Secondly, we see that even in the presence of a $80\%$ perturbation, the block membership error achieved by BCS clustering algorithm is below $30\%$ while all other algorithms produce an error above $40\%$. Our BCS clustering algorithm is thus more robust in the presence of perturbation.

For our second experiment, we generate two graphs following the same stochastic blockmodel described above $SBM(k,\rho,P)$, each containing $500$ nodes. Then we combine them using equation \ref{modeladv} with parameter $\alpha=0.1$. Table \ref{tableExperiment2_1} shows the result obtained. As expected, our method perfectly recovers the block membership of nodes while other methods all produce block membership errors above $30\%$. As in previous experiment, parameter $d$ of SVD clustering algorithm is set to $k$ (rank of matrix $P$).
\begin{table}[!h]
\begin{center}
\begin{tabular}{|| c |c|c||}
   \hline
   Method & Block Membership Error & 1-NMI\\
   \hline
	BCS & $\mathbf{0}$ & $\mathbf{0}$\\
	SVD & $0.5$ & $0.47$\\
	BIB & $0.49$ & $0.47$\\
	REI & $0.34$ & $0.25$\\
	RW & $0.49$ & $0.46$\\
	CB & $0.40$ & $0.49$\\
	DEG & $0.61$ & $0.58$\\   
   \hline
\end{tabular}
\end{center}
\caption{Experiment based on two block-cycles of $500$ nodes generated by stochastic blockmodel $SBM(P,\rho,k)$ and combined to form a block-cycle of $1000$ nodes based on equation \ref{modeladv} with $\alpha=0.1$.}
\label{tableExperiment2_1}
\end{table}

\begin{figure*}[!t]
\centering
\subfloat[Block-membership error]{\includegraphics[width=2.5in]{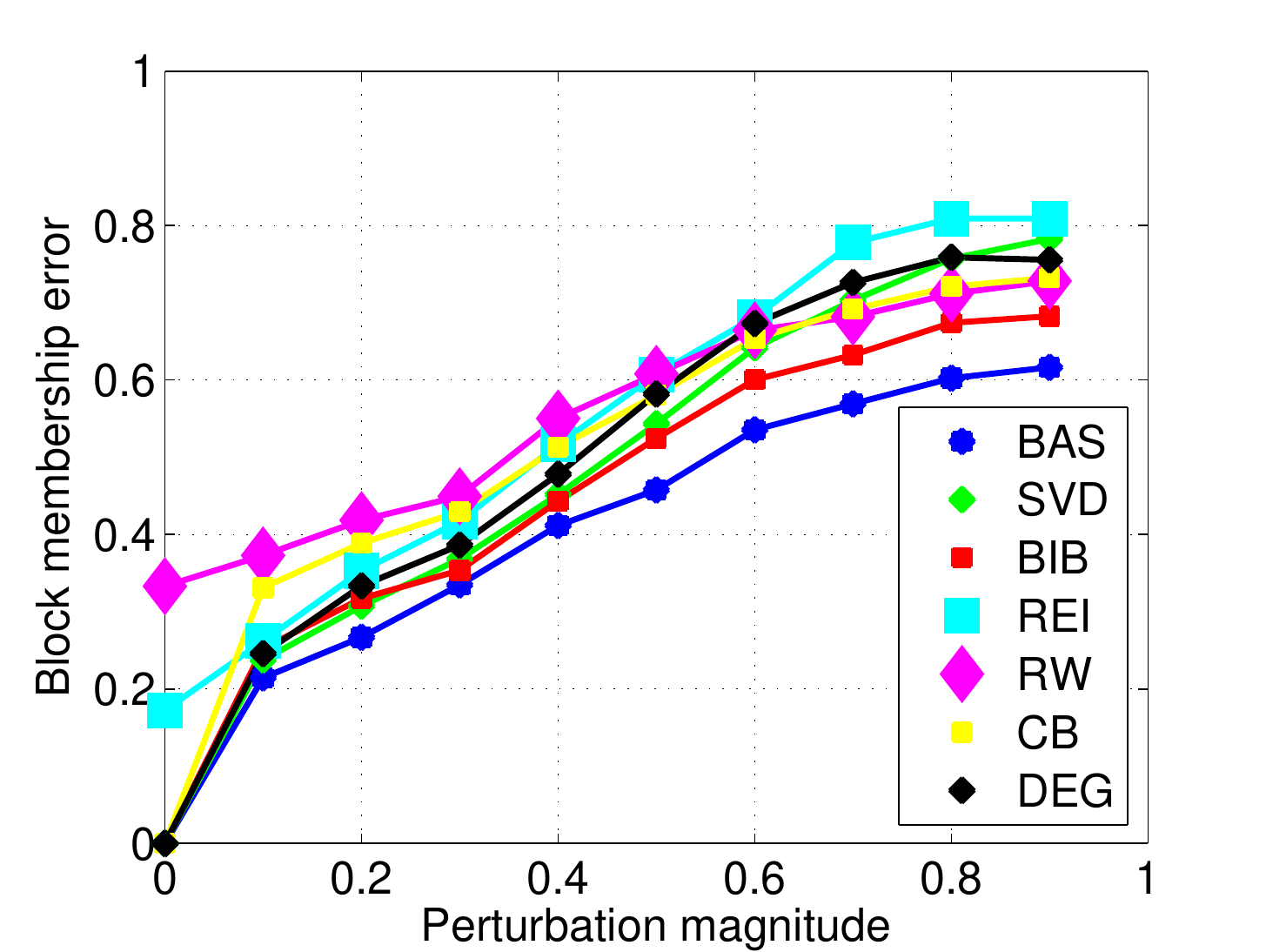}
\label{perturbedBlockAcycleNonUniform}}
\hfil
\subfloat[1-NMI]{\includegraphics[width=2.5in]{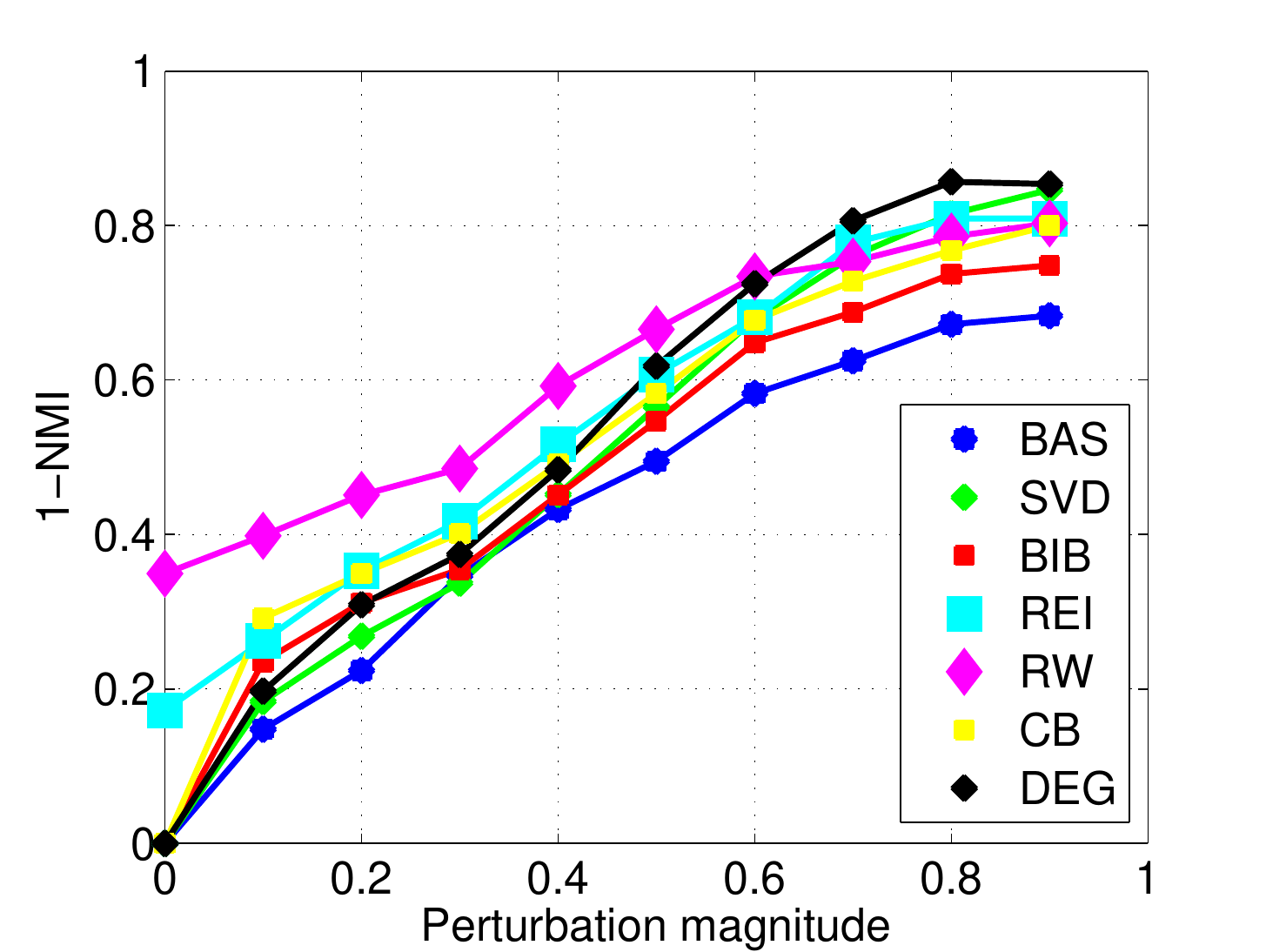}
\label{perturbedBlockAcycleUniform}}
\caption{Evolution of the block membership errors (left) and errors based on NMI (right) for our BAS algorithm and for SVD, BIB, RW, REI, CB and DEG clustering algorithms as a function of the magnitude of the perturbation (parameter $\epsilon$) on a block-acyclic graph of $n=1000$ nodes.}
\label{perturbedBlockAcycle}
\end{figure*}

\subsection{Experiment for BAS clustering algorithm}
Using the same framework as for block-cycles, in the first experiment, we generate an unperturbed block-acyclic graph $G$ with block membership function $\tau$ such that $[G,\tau]\sim SBM(k,\rho,P)$ with $k=8$
\begin{equation}
\begin{array}{rcl}
\rho &=& [0.18,0.2,0.05,0.2,0.14,0.04,0.07,0.13]\\
P_{st} &=& \left\lbrace\begin{array}{ll}
0.5 & \text{ if }s<t\\
0 & \text{ otherwise }
\end{array}\right.
\end{array}.
\end{equation} 
We generate a perturbing graph $\tilde{G}$ with block membership function $\tilde{\tau}$ such that $[\tilde{G},\tilde{\tau}]\sim SBM(k,\rho,\tilde{P})$ with parameter $\tilde{P}$ of the form $\tilde{P}=\epsilon Q$ where $\epsilon\in [0,1]$ and $Q\in [0,1]^{k\times k}$ with random entries in $[0,1]$. Figure \ref{perturbedBlockAcycle} displays the result for $n=1000$ nodes (target dimension $d$ for SVD clustering algorithm is chosen as the rank of matrix $P$ which is $k-1$). We observe that BAS clustering algorithm achieves the smallest block membership error and NMI-based error (or close to the smallest for perturbation of magnitude $0.3$).

For our second experiment, we generate two graphs following the same stochastic blockmodel described above $SBM(k,\rho,P)$, each containing $500$ nodes. Then we combine them using equation \ref{modeladv} with parameter $\alpha=0.1$. Table \ref{tableExperiment2} shows the result obtained (parameter $d$ of SVD clustering algorithm is again set to the rank of $P$, $k-1$ in this case). Our method recovers the block membership of nodes with a low error while other methods all produce block membership errors above $40\%$ as shown in table \ref{tableExperiment2}.
\begin{table}[!h]
\begin{center}
\begin{tabular}{|| c |c|c||}
   \hline
   Method & Block Membership Error & 1-NMI\\
   \hline
	BCS & $\mathbf{0.05}$ & $\mathbf{0.11}$\\
	SVD & $0.50$ & $0.46$\\
	BIB & $0.52$ & $0.49$\\
	REI & $0.53$ & $0.48$\\
	RW & $0.40$ & $0.38$\\
	CB & $0.56$ & $0.58$\\
	DEG & $0.52$ & $0.46$\\   
   \hline
\end{tabular}
\end{center}
\caption{Experiment based on two block-acyclic graphs of $500$ nodes generated by stochastic blockmodel $SBM(P,\rho,k)$ and combined to form a block-acyclic graph of $1000$ nodes based on equation \ref{modeladv} with $\alpha=0.1$.}
\label{tableExperiment2}
\end{table}

\section{Application to real-world networks}\label{ApplicationSection}
Pure block-cycles are rarely encountered in real-world networks. But block-acyclic networks (or graphs that are close to being block-acyclic) are common. In this section, we apply BAS clustering algorithm to two real-world networks: a trophic network and a network of Autonomous Systems.

Once the block partition $\tau: V\rightarrow \{1,...,k\}$ of nodes in graph $G=(V,E,W)$ is obtained, it is of interest to find the \textit{ranking} of blocks which can be interpreted as a topological order of blocks. We define a graph $G^B=(V^B,E^B)$ where $V^B=\{1,...,k\}$ and
\begin{equation}
(m,n)\in E^B \Leftrightarrow \underset{\substack{u:\tau(u)=m\\v:\tau(v)=n}}{\sum}W_{uv}-W_{vu}>0.
\end{equation}
If graph $G$ is indeed close to being block-acyclic, then graph $G^B$ is acyclic. The ranking of blocks is then obtained by computing a topological order of nodes in $G^B$ \cite{Cormen1990} which labels nodes in $G^B$ from $1$ to $k$ such that 
\begin{equation}
(i,j)\in E^B \Rightarrow i<j.
\end{equation}
Hence we relabel each block with its rank as given by the topological order in $G^B$. This ranking of blocks can be further used to improve the quality of the block partitioning returned by BAS clustering algorithm through a simple postprocessing step. Indeed, in real-world graphs, BAS clustering algorithm might confuse consecutive blocks in the hierarchy (for instance assigning a node to block $b-1$ or $b+1$ instead of its true block $b$) as the presence of perturbation causes the separation between corresponding clusters in the embedding space $\mathbb{R}^k$ to become fuzzy. Hence, we define a quality criterion $C_A$ measuring how close to block-acyclic the graph is, based on node partitioning $\tau$
\begin{equation}C_A=\frac{\vert \{(u,v)\in E:\tau(u)<\tau(v)\}\vert}{\vert E\vert}.
\end{equation}
The post-processing step considers each node $i$ and checks whether $C_A$ is increased by assigning $i$ to block $\tau(i)+1$ or $\tau(i)-1$ and changes the block membership of $i$ if this is the case. This process is repeated once for each node in a random order. This post-processing step causes a negligible increase in the time of computation of BAS clustering algorithm. But empirical analysis show that it improves the quality of the result for instance in the case of the trophic network presented below.

\begin{table*}[!h]
\begin{center}
\resizebox{\textwidth}{!}{
\begin{tabular}{||l|r||l|r||l|r||l|r||l|r||}
   \hline
   \multicolumn{2}{|c|}{\textbf{Block 1}} &\multicolumn{2}{|c|}{\textbf{Block 2}} &\multicolumn{2}{|c|}{\textbf{Block 3}} &\multicolumn{2}{|c|}{\textbf{Block 4}} & \multicolumn{2}{|c|}{\textbf{Block 5}}\\
   \hline
   Agents & Levels & Agents & Levels & Agents & Levels & Agents & Levels & Agents & Levels\\
   \hline
\textbf{Average}	&	\textbf{1.65}	&	\textbf{Average}	&	\textbf{3.13}	&	\textbf{Average}	&	\textbf{4.03} & \textbf{Average}	&	\textbf{4.69}	&	\textbf{Average}	&	\textbf{5.17}	\\
Input	&	0	&	Roots	&	1	&	Coral	&	3.43 & Rays	&	4.89	&	Sharks	&	5.12	\\
2um Spherical Phytoplankt	&	1	&	Water Cilitaes	&	2.9	&	Other Cnidaridae	&	4.26	& Bonefish	&	4.81	&	Tarpon	&	5.23	\\
Synedococcus	&	1	&	Acartia Tonsa	&	2.91	&	Echinoderma	&	3.7	&Lizardfish	&	4.98	&	Grouper	&	5.06	\\
Oscillatoria	&	1	&	Oithona nana	&	2.91	&	Lobster	&	4.58&Catfish	&	4.93	&	Mackerel	&	5.2	\\
Small Diatoms (<20um)	&	1	&	Paracalanus	&	2.91	&	Predatory Crabs	&	4.49	&Eels	&	4.93	&	Barracuda	&	5.21	\\
Big Diatoms (>20um)	&	1	&	Other Copepoda	&	3.36	&	Callinectus sapidus	&	4.49&Brotalus	&	4.85	&	Loon	&	5.21	\\
Dinoflagellates	&	1	&	Meroplankton	&	3.63	&	Stone Crab	&	4.31&Needlefish	&	4.72	&	Greeb	&	5.05	\\
Other Phytoplankton	&	1	&	Other Zooplankton	&	2.91	&	Sardines	&	4.18&Snook	&	4.64	&	Pelican	&	5.2	\\
Benthic Phytoplankton	&	1	&	Sponges	&	3	&	Anchovy	&	4.25 & Jacks	&	4.71	&	Comorant	&	5.18	\\
Thalassia	&	1	&	Bivalves	&	3	&	Bay Anchovy	&	3.95&Pompano	&	4.83	&	Big Herons and Egrets	&	5.14	\\
Thalassia	&	1	&	Bivalves	&	3	&	Bay Anchovy	&	3.95&Other Snapper	&	4.69	&	Predatory Ducks	&	5.14	\\
Halodule	&	1	&	Detritivorous Gastropods	&	4.13	&	Toadfish	&	4.85&Gray Snapper	&	4.78	&	Raptors	&	5.66	\\
Syringodium	&	1	&	Epiphytic Gastropods	&	2	&	Halfbeaks	&	3.26&Grunt	&	4.36	&	Crocodiles	&	5.39	\\
Drift Algae	&	1	&	Predatory Gastropods	&	5.12	&	Other Killifish	&	3.7&Scianids	&	4.63	&	Dolphin	&	5.27	\\
Epiphytes	&	1	&	Detritivorous Polychaetes	&	3.88	&	Goldspotted killifish	&	4.3&Spotted Seatrout	&	4.69	&	Water POC	&	5.04	\\
Free Bacteria	&	2.92	&	Predatory Polychaetes	&	4.44	&	Rainwater killifish	&	3.82&Red Drum	&	4.77	&	Benthic POC	&	4.55	\\
Water Flagellates	&	3.19	&	Suspension Feeding Polych	&	3.32	&	Silverside	&	3.95&Spadefish	&	4.58	&	Output	&	5.22	\\
Benthic Flagellates	&	3.77	&	Macrobenthos	&	4.13	&	Other Horsefish	&	3.97&Parrotfish	&	3.86	&	Respiration	&	5.19	\\
Benthic Ciliates	&	4.1	&	Benthic Crustaceans	&	3.88	&	Gulf Pipefish	&	4.06&Flatfish	&	4.58	&		&		\\
Meiofauna	&	4.35	&	Detritivorous Amphipods	&	3.88	&	Dwarf Seahorse	&	3.97&Filefishes	&	4.78	&		&		\\
	&		&	Herbivorous Amphipods	&	2.59	&	Mojarra	&	4.32&Puffer	&	4.77	&		&		\\
	&		&	Isopods	&	2	&	Porgy	&	4.26&Other Pelagic Fishes	&	4.74	&		&		\\
	&		&	Herbivorous Shrimp	&	2	&	Pinfish	&	3.82&Small Herons and Egrets	&	4.85	&		&		\\
	&		&	Predatory Shrimp	&	3.9	&	Mullet	&	3.86&Ibis	&	4.8	&		&		\\
	&		&	Pink Shrimp	&	3.43	&	Blennies	&	4.12&Roseate Spoonbill	&	4.91	&		&		\\
	&		&	Thor Floridanus	&	2	&	Code Goby	&	4.41&Herbivorous Ducks	&	4.19	&		&		\\
	&		&	Detritivorous Crabs	&	3.72	&	Clown Goby	&	4.41&Omnivorous Ducks	&	4.28	&		&		\\
	&		&	Omnivorous Crabs	&	3.79	&	Other Demersal Fishes	&	4.21&Gruiformes	&	4.91	&		&		\\
	&		&	Sailfin Molly	&	2	&	DOC	&	1.92&Small Shorebirds	&	4.93	&		&		\\
	&		&	Green Turtle	&	2	&		&		&Gulls and Terns	&	4.91	&		&		\\
&&&&&&Kingfisher	&	4.86	&		&		\\
&&&&&&Loggerhead Turtle	&	4.79	&		&		\\
&&&&&&Hawksbill Turtle	&	4.71	&		&		\\
&&&&&&Manatee	&	3.9	&		&		\\

   \hline
\end{tabular}}
\end{center}
\caption{Blocks returned by BAS clustering algorithm along with the trophic levels of each agent. Block 1 is at the bottom of the hierarchy, 5 is at the top.}
\label{trophicTable}
\end{table*}

\begin{figure}[!t]
\centering
\includegraphics[width=2.5in]{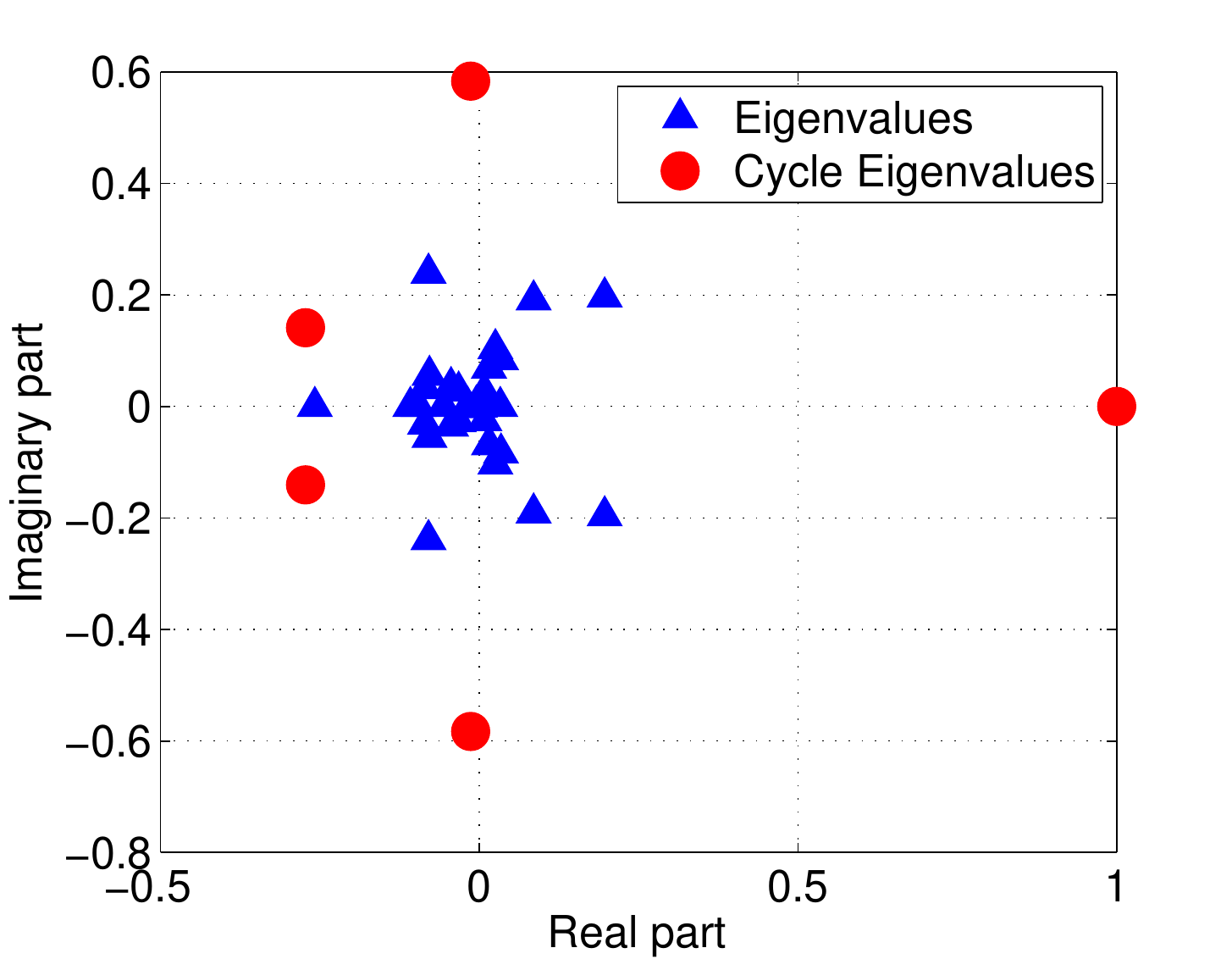}
\caption{Eigenvalues of the transition matrix of the trophic network. The five eigenvalues with largest modulus are the cycle eigenvalues.}
\label{eigenvalue_application2_trophic}
\end{figure}

\subsection{Finding clusters in trophic networks}
The basic idea of trophic networks is to represent all transfers of carbon in an ecosystem as a single directed graph \cite{Elton1927}, in which nodes are living beings or any other agent that stores carbon such as animals, carbon dioxide in the air, etc. A directed edge $(A,B)$ represents a steady transfer of carbon from $A$ to $B$, for instance a predator-prey relationship where $B$ is the predator and $A$ is the prey. Considering the trophic network of an isolated ecosystem, we should be able to partition nodes into groups such that nodes belonging to the same group have roughly the same groups of predators and groups of preys. Empirical observations lead to the following five categories \cite{Butz2004}: primary producers (plants and algae that produce organic material by photosynthesis), primary consumers (herbivores), secondary consumers (carnivores that hunt herbivores, such as badgers), tertiary consumers (carnivores hunting other carnivores, such as hawks) and top-level predators having no other predators (such as bears). Decomposers are sometimes associated to primary producers to form level 1 as neither of them hunts for its livelihood. Other types of "carbon holders" may be included in the network such as the atmosphere that receives carbon dioxide from all consumers and provides the producers with carbon. A popular way to estimate the role of a species in a food web is the computation of its trophic level. The trophic level can be viewed as the level at which a species can be found in a food chain. The trophic level $T_i$ of species $i$ is defined as $T_i=1+\sum_{j}T_jp_{ji}$
where $p_{ji}$ denotes the fraction of $j$ in the diet of $i$ \cite{Pauly2005}. Hence, if we define $P$ as the transition matrix of a food network with associated adjacency matrix $W$ then the vector $T$ of trophic levels is $T=(I-P^T)^+\mathbf{1}$ where $\mathbf{1}$ is the column vector of all ones and $(I-P^T)^+$ denotes the Moore-Penrose pseudoinverse of $(I-P^T)$ \cite{Penrose1955}.

\begin{table}[!h]
\begin{center}
\resizebox{\textwidth}{!}{
\begin{tabular}{|c|c|c|c|c|c|}
   \hline
   & \begin{tabular}{@{}c@{}}Primary\\ Producers\end{tabular}
   & \begin{tabular}{@{}c@{}}Primary\\ Consumers\end{tabular}
   & \begin{tabular}{@{}c@{}}Secondary\\ Consumers\end{tabular}
   & \begin{tabular}{@{}c@{}}Tertiary\\ Consumers\end{tabular}
   & \begin{tabular}{@{}c@{}}Top-Level\\ Predators\end{tabular}
   \\
   \hline
	\begin{tabular}{@{}c@{}}Primary\\ Producers\end{tabular}& $6$ & $\mathbf{25}$ & $\mathbf{16}$ & $\mathbf{9}$ & $\mathbf{9}$\\
	   \hline
	\begin{tabular}{@{}c@{}}Primary\\ Consumers\end{tabular}& $0$ & $7$ & $\mathbf{47}$ & $\mathbf{41}$ & $\mathbf{23}$\\
	   \hline
	\begin{tabular}{@{}c@{}}Secondary\\ Consumers\end{tabular}& $0.4$ & $2$ & $9$ & $\mathbf{29}$ & $\mathbf{43}$ \\   			   \hline
	\begin{tabular}{@{}c@{}}Tertiary\\ Consumers\end{tabular}& $0$ & $0.3$ & $6$ & $2$ & $\mathbf{26}$\\
       \hline
	\begin{tabular}{@{}c@{}}Top-Level\\ Predators\end{tabular}& $0.6$ & $2$ & $4$ & $0$ & $7$\\ 
       \hline
\end{tabular}}
\end{center}
\caption{Percentage of directed edges between blocks: entry $(i,j)$ equals $100 E(i,j)/(S(i)S(j))$ where $E(i,j)$ is the total number of directed edges from block $i$ to block $j$ and $S(i)$ and $S(j)$ are the number of nodes in blocks $i$ and $j$ respectively. Upper triangular part of the matrix appears in bold.}
\label{adjmatrix_trophic}
\end{table}

It is clear that the adjacency matrix of a food web has a block upper triangular shape and intuition suggests that the corresponding graph is block-acyclic. Our goal is to use BAS clustering algorithm to partition the agents of a food web into five groups corresponding to the five categories described previously. Then, we check if the partitioning of nodes that we obtain is consistent with the trophic levels.
We apply BAS clustering algorithm to a trophic network of Florida Bay \cite{Ulanowicz2005} which consists of $128$ species or other relevant agents in the trophic network. Having information about the feeding relationships in the food web, we build a directed graph $G=(V,E)$ where a directed unweighted edge connects node $A$ to node $B$ if $B$ feeds on $A$. Figure \ref{eigenvalue_application2_trophic} shows the eigenvalues of the transition matrix and the cycle eigenvalues used by BAS clustering algorithm. We observe that all but a few eigenvalues are close to zero. As we partition nodes into five blocks, we select the five eigenvalues with largest norm.

Table \ref{trophicTable} shows the five blocks extracted by BAS clustering algorithm. The table also includes the trophic levels of each agent and the average trophic level of each cluster. We observe that the block partitioning we obtain is consistent with the empirical classification into five categories described above:
\begin{itemize}
\item \textbf{block 1:} mostly autotrophs and bacteria,
\item \textbf{block 2:} herbivorous and small omnivorous (such as gastropods and shrimps),
\item \textbf{block 3:} larger omnivorous and carnivorous (Killifish, crabs, lobsters...),
\item \textbf{block 4:} carnivorous (kingfishers, catfish, snappers...),
\item \textbf{block 5:} top-level predators (sharks, cormorants...).
\end{itemize}

Table \ref{adjmatrix_trophic} shows the percentage of directed edges observed between blocks obtained by BAS clustering algorithm which exhibits the block upper triangular shape of the adjacency matrix of the trophic network. Finally, we check that our classification is globally consistent with trophic levels. Let us denote by $V$ the set of $n=128$ agents, $\tau:V\rightarrow\{1,...,5\}$ the function assigning each agent to a block and $l:V\rightarrow\mathbb{R}^+$ the trophic levels such that $l(x)$ is the trophic level of agent $x$. We compute the \textit{inversion error}
\begin{equation}
\frac{2}{n(n-1)}|\{(i,j)\in V\times V\text{ : }(\tau(i)-\tau(j))(l(i)-l(j))<0\}|
\end{equation}
which computes the proportion of pairs $i,j$ for which block memberships are inconsistent with trophic levels. We obtain an inversion error of $7\%$ which means that the partitioning is consistent with the ground truth trophic levels. In comparison, the partitioning obtained by applying SVD clustering algorithm produces an inversion error of $25\%$. This confirms that our algorithm is more efficient for detecting blocks in block-acyclic networks. As said before, the failure of SVD clustering algorithm is due to the inability of stochastic blockmodels to capture the structure of trophic networks.

\subsection{Network of Autonomous Systems}
As a second application, we apply BAS clustering algorithm to an internet-based network. Autonomous Systems are sets of computers sharing a common routing protocol which roughly corresponds to computers that get their internet connection from the same Internet Service Provider (ISP). Based on their importance, ISPs can be partitioned in tiers: ISPs in lower tiers sign business agreements with higher tiers' ISPs for data transit. Hence the network of money transfers between ISPs has a hierarchical structure\cite{Center2013}.

We consider a graph $G=(V,E,W)$ in which vertices are ISPs and an edge $(u,v)$ represents a money transfer from ISP $u$ to ISP $v$ during a certain time span. We use the dataset published by the \textit{Center for Applied Internet Data Analysis} \cite{Center2013} for November 11, 2013 which involves $45,427$ ISPs (nodes) and 230,194 connections (edges). Unfortunately, business agreements between ISPs are often kept secret, hence these relationships are inferred from Border Gateway Protocol data using a heuristic method from \cite{Luckie2013}. To eliminate bidirectional connections between ISPs belonging to the same tier, we keep the asymmetric part of $W$ only: $W\leftarrow (W-W^T)_+$. The resulting graph is expected to be block-acyclic \cite{Van2015} and the ranks of vertices in the block-acyclic structure should reflect the partitioning of ISPs into tiers. Previous research suggested to partition ISPs into $3$ tiers with tier $1$ containing the most important ISPs in terms of data transit \cite{Winther2006}. The spectrum of the transition matrix of the graph is shown in figure \ref{eigenvalue_application3}. We observe that there are indeed three eigenvalues with a significantly larger modulus. Hence we apply BAS clustering algorithm for the extraction of $k=3$ blocks. Figure \ref{application3_graph} shows the result of the partitioning and the number of connections between each block in the original network (of adjacency matrix $W$). This partitioning reveals the block-acyclic structure of the AS network. 

In order to assess the quality of the partitioning, we compare it to two available measures of the importance of Autonomous Systems. We first consider the transit degree of ISPs which measures the number of ISPs for which a given ISP provides a transit of data \cite{Center2013}. Considering the partitioning computed by BAS clustering algorithm, the average transit degree is $0.31$ in block $1$, $5.16$ in block $2$ and $16.49$ in block $3$. Moreover the inversion error between transit degrees and ranks computed by BAS clustering algorithm is $6\%$ which confirms the consistency of the partitioning with transit degrees. Secondly, we check the consistency of blocks with the grouping into tiers. There is no unique partitioning of ISPs in tiers and most of the ones that are available include ISPs in tier $1$ only. Hence we only check the coherence of our block partitioning with the most common version of tier $1$ which includes sixteen top ISPs in the world among which AT\&T (U.S.), Deutsche Telekom (Germany), etc. All of these tier $1$ ISPs are classified in block $3$ of the block-acyclic network. The blocks discovered by our BAS clustering algorithm are thus somewhat coherent with the traditional partitioning into tiers. However both classifications are not expected to be equivalent since BAS clustering algorithm is only based on the graph topology while tiers also take additional information into account such as the ownership of an international fiber optic network, etc. 

\begin{figure}[!t]
\centering
\includegraphics[width=2.5in]{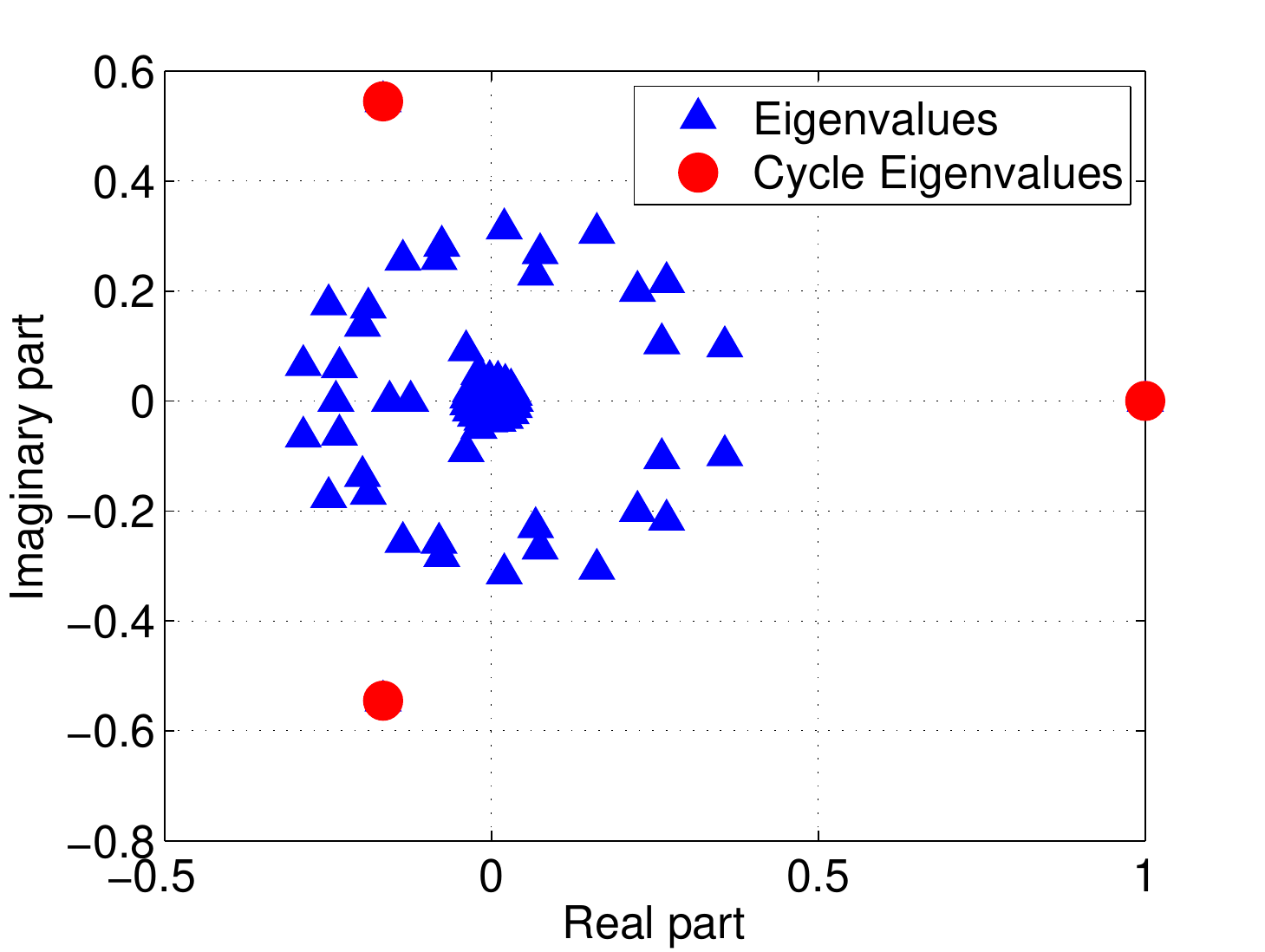}
\caption{$200$ top eigenvalues of the transition matrix of the AS network. The three cycle eigenvalues (with largest modulus) used by BAS clustering algorithm are circle-shaped. }
\label{eigenvalue_application3}
\end{figure} 

\begin{figure}[!t]
\centering
\includegraphics[width=2.5in]{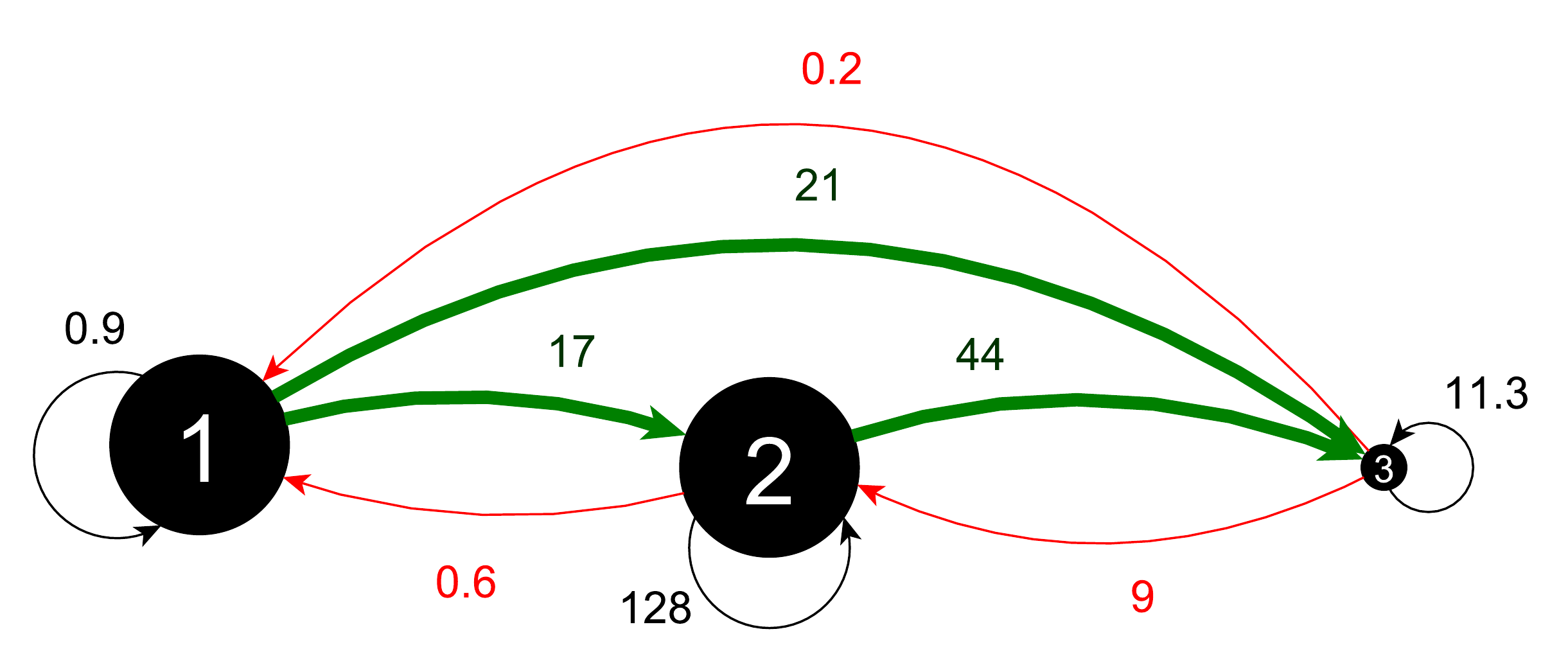}
\caption{Graph of the partitioning of nodes of the AS network into three blocks. Number of nodes in blocks 1, 2 and 3 are respectively 18,171, 20,896 and 6,360. The label of each edge represents the number of connections from one block to another block (in thousands).}
\label{application3_graph}
\end{figure}

\section{Conclusion}
In this paper we proposed two new algorithms to detect blocks of nodes in graphs with a block-cyclic or a block-acyclic structure. These algorithms are based on the computation of complex eigenvectors and eigenvalues of the transition matrix associated to the graph. We show that the algorithms succeed in detecting such blocks and they outperform other methods that are theoretically able to extract clusters from block-cyclic or block-acyclic graphs. As we mentioned, we seek specific structural patterns (cyclic or acyclic) in graphs but we make no assumption on the degrees of nodes or the number of connections which makes our tailored approach more efficient and cheaper than methods that seek more general patterns such as algorithms based stochastic blockmodels. Moreover, the time complexity of BCS and BAS clustering algorithms is linear in the number of edges in the graph which is the same as for other spectral methods such as traditional spectral clustering. We also show that BAS clustering algorithm provides a better understanding of a real-world database: it extracts trophic levels from a food web and it exhibits the hierarchical structure of a worldwide network of Autonomous Systems. As we mentioned, the lower quality of the partitioning provided by algorithms such as the ones based on stochastic blockmodels in block-cyclic and block-acyclic graphs is partly due to some assumptions of regularity within blocks which are not necessarily verified in some real-world networks such as food webs. In contrast, our algorithms detect block-cyclic and block-acyclic structures regardless of other regularity properties.

Future work may include the development of an automatic method for choosing a suitable number $k$ of blocks which could be based on an inspection either of the spectrum of the transition matrix or of the graph itself. Another research direction would be to generalize this framework for the detection of clusters with more complex patterns of connection. For instance, we may generalize our algorithm to graphs in which a subset of nodes forms a block-cyclic structure (or a block-acyclic structure). If this block-cyclic component is weakly connected to the rest of the graph, then a straightforward extension of BCS clustering algorithm would detect it successfully. Finally BCS and BAS clustering algorithms could be applied to other real-world databases. In his book "The Human Group" \cite{Homans1950} published in 1950, the American sociologist George Homans states that small social groups tend to form block-symmetric-acyclic networks, namely block-acyclic graphs with additional bidirectional links within blocks. It would be interesting to use BAS clustering algorithm to check if this assumption is verified for large web-based social networks involving directed connections such as Twitter for instance.

\section*{Acknowledgement}
This work presents research results of the Belgian Network DYSCO (Dynamical Systems, Control, and Optimisation), funded by the Interuniversity Attraction Poles Programme, initiated by the Belgian State, Science Policy Office, and the ARC (Action de Recherche Concerte) on Mining and Optimization of Big Data Models funded by the Wallonia-Brussels Federation.

\section*{Appendix}
\appendix
The proofs of theorems \ref{theoremCycle} and \ref{theoremPerturbation} are provided below.

\begin{theorem}[Existence of cycle eigenvalues]~\\
Let $G=(V,E,W)$ be a block-cycle with $k$ blocks $V_1,...,V_k$ such that $d_i^{out}>0$ for all $i\in V$. Then $\lambda_l=e^{-2\pi i\frac{l}{k}}\in spec(P)$ for all $0\leq l\leq k-1$, namely there are $k$ eigenvalues located on a circle centered at the origin and with radius $1$ in the complex plane. The eigenvector associated to the eigenvalue $e^{-2\pi i\frac{l}{k}}$ is
$$
u^l_j=\left\{
\begin{array}{ll}
e^{2\pi i\frac{lk}{k}} & j\in V_1\\
e^{2\pi i\frac{l(k-1)}{k}} & j\in V_2\\
\vdots & \\
e^{2\pi i\frac{l}{k}} & j\in V_k
\end{array}.\right.
$$ 
Moreover, if $G$ is strongly connected, then the eigenvalues $\lambda_0,...,\lambda_{k-1}$ have multiplicity $1$ and all other eigenvalues of $P$ have a modulus strictly lower than $1$.
\end{theorem}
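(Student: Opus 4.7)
My plan is to handle the two claims in sequence: first the algebraic identity exhibiting the eigenpairs, then the invocation of Perron--Frobenius theory for the simplicity and the strict-inequality statement.

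For the existence claim, I would just verify $Pu^l=\lambda_lu^l$ by direct computation. The assumption $d_i^{out}>0$ ensures that $P$ is a genuine row-stochastic matrix, so every row sums to $1$. By the definition of a block-cycle, if $i\in V_s$ then every nonzero entry $P_{ij}$ corresponds to a $j\in V_{s+1\,(\mathrm{mod}\,k)}$, on which the candidate eigenvector $u^l$ takes the single value $e^{2\pi i l(k-s)/k}$ (with the convention $V_{k+1}=V_1$, where the value is $1=e^{2\pi i l k/k}$). Factoring this constant outside the sum and using $\sum_j P_{ij}=1$ yields
\begin{equation*}
(Pu^l)_i=e^{2\pi i l(k-s)/k}=e^{-2\pi i l/k}\cdot e^{2\pi i l(k-s+1)/k}=\lambda_l\,u^l_i,
\end{equation*}
which is exactly what is required. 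This step is a routine calculation and I do not expect any difficulty there.

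The harder half is the second statement, which is a spectral-gap / peripheral-spectrum result. The natural approach is to reduce it to classical Perron--Frobenius theory for irreducible nonnegative matrices with period $k$. Strong connectivity of $G$ translates immediately into irreducibility of $P$. The key point is then to compute the period of $P$ (the gcd of lengths of closed walks through any fixed vertex): because the block structure forces every edge to advance the block index by $1$ modulo $k$, every closed walk in $G$ has length divisible by $k$, so the period is a multiple of $k$; on the other hand the definition guarantees the existence of a directed cycle of length exactly $k$, so the period divides $k$. Hence the period equals $k$. Applying the Perron--Frobenius theorem for irreducible periodic nonnegative matrices (see e.g.\ Seneta), the peripheral spectrum of $P$ consists precisely of the $k$-th roots of unity $\{e^{-2\pi i l/k}:0\le l\le k-1\}$, each a simple eigenvalue, while every other eigenvalue lies strictly inside the unit disk.

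The main obstacle, and the only place where a bit of care is needed, is the period computation: one must justify cleanly that the block-cyclic constraint on edges really implies that all closed walks have length a multiple of $k$ (a short induction on walk length, using the mapping $\tau$ on vertices as a $\mathbb{Z}/k\mathbb{Z}$-valued ``phase''), and combine this with the existence of a length-$k$ cycle to pin the period down to $k$ exactly rather than a proper divisor. Once the period is identified, the rest is a direct citation of the classical theorem, and the explicit eigenvectors found in Step~1 are consistent with the simple algebraic multiplicities provided by that theorem.
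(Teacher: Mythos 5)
Your proposal is correct and follows essentially the same route as the paper: a direct verification of $Pu^l=\lambda_l u^l$ using row-stochasticity and the block-to-block edge constraint, followed by an appeal to the Perron--Frobenius theorem for irreducible matrices of period $k$ to get simplicity of the peripheral eigenvalues and the strict modulus bound on the rest. If anything, your explicit justification that the period equals $k$ (all closed walks have length divisible by $k$ via the $\mathbb{Z}/k\mathbb{Z}$-valued phase $\tau$, combined with the assumed length-$k$ cycle) fills in a step the paper merely asserts.
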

\begin{proof}
For the $l$-th eigenvalue $\lambda_l=e^{-2\pi i\frac{l}{k}}$, we consider the following eigenvector:
$$
u^l_j=\left\{
\begin{array}{ll}
e^{2\pi i\frac{lk}{k}} & j\in V_1\\
e^{2\pi i\frac{l(k-1)}{k}} & j\in V_2\\
\vdots & \\
e^{2\pi i\frac{l}{k}} & j\in V_k
\end{array}.\right.
$$
Then, $\forall j\in V_s$ with $1\leq s\leq k-1$:
$$
[Pu^l]_j=\frac{1}{d^{out}_j}\sum_{r\in V_{s+1}}w_{jr}u^l_r.
$$
If $j\in V_s$ and $r\in V_{s+1;}$, clearly, $u^l_r=e^{-2\pi i\frac{l}{k}}u^l_j$, hence
$$
\begin{array}{rcl}
[Pu^l]_j &=& \frac{1}{d^{out}_j}\sum_{r\in V_{s+1}}w_{jr}u^l_r\\
&=& \frac{1}{d^{out}_j}\sum_{r\in V_{s+1}}w_{jr} e^{-2\pi i\frac{l}{k}}u^l_j\\
&=& e^{-2\pi i\frac{l}{k}}u^l_j\frac{1}{d^{out}_j}\sum_{r\in V_{s+1}}w_{jr}\\
&=& e^{-2\pi i\frac{l}{k}}u^l_j
\end{array}
$$
Let us assume now that $G$ is strongly connected. To show that eigenvalues $\lambda_0,...,\lambda_{k-1}$ have multiplicity $1$, we refer to Perron Frobenius theorem \cite{Seneta2006}. As $G$ is strongly connected, $P$ is irreducible and the definition of block-cycle implies that the period of $G$ is $k$. Hence, $P$ has exactly $k$ eigenvalues, each of multiplicity $1$ and having modulus $1$. These eigenvalues are necessarily the $k$ cycle eigenvalues $\lambda_0,...,\lambda_{k-1}$.
\end{proof}

\begin{theorem}[Perturbation of Cycle Eigenvalues]
~\\
Let $G=(V,E,W)$ be a strongly connected block-cycle with $k$ blocks $V_1,...,V_k$ such that $d_i^{out}>0$ for all $i\in V$, let $\lambda_0,...,\lambda_{k-1}$ be the $k$ cycle eigenvalue and $u^l$, $y^l$ be the corresponding right and left eigenvectors. Let the $\hat{G}=(V,\hat{E},\hat{W})$ be a perturbed version of $G$ formed by appending positively weighted edges to $G$ except self-loops. Let $P$ and $\hat{P}$ denote the transition matrices of $G$ and $\hat{G}$ respectively. We define the quantities
\begin{equation}\label{deltaKetc2}
\begin{array}{rcl}
\sigma &=&\underset{(i,j)\in\hat{E}}{\max}\text{ }\frac{\hat{d}_j^{in}}{\hat{d}_i^{out}}\\
\rho &=&\underset{i}{\max}\text{ }\frac{\hat{d}_i^{out}-d_i^{out}}{d_i^{out}}
\end{array}
\end{equation}

where $d^{in}_i$, $d^{out}_i$, $\hat{d}^{in}_i$ and $\hat{d}^{out}_i$ represent the in-degree and out-degree of $i$-th node in $G$ and $\hat{G}$ respectively.

Then,
\begin{enumerate}
\item for any cycle eigenvalue $\lambda_l\in spec(P)$, there exists an eigenvalue $\hat{\lambda}_l$ so that

$$
\left\vert\hat{\lambda}_l-\lambda_l\right\vert \leq \sqrt{2n}\Vert f\Vert_2\sigma^{\frac{1}{2}}\rho^{\frac{1}{2}}+\mathcal{O}\left(\sigma\rho\right)
$$
where $f$ is the Perron eigenvector of $G$, namely the left eigenvector of the transition matrix of $G$ associated to eigenvalue $1$ with $\Vert f\Vert_1=1$,

\item there exists an eigenvector $\hat{u}^l$ of $\hat{P}$ associated to eigenvalue $\hat{\lambda}^l$ verifying
$$ 
\Vert \hat{u}^l-u^l\Vert_2\leq \sqrt{2}\Vert (\lambda^lI-P)^{\#}\Vert_2 \sigma^{\frac{1}{2}}\rho^{\frac{1}{2}}+\mathcal{O}\left(\sigma\rho\right)
$$
where $u^l$ is the eigenvector of $P$ associated to eigenvalue $\lambda^l$ and $(\lambda^lI-P)^{\#}$ denotes the Drazin generalized inverse of $(\lambda^lI-P)$.
\end{enumerate}

\end{theorem}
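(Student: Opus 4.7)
The plan is to treat $\hat{P}=P+E$ as an additive perturbation of the block-cyclic transition matrix and apply first-order perturbation theory for simple eigenvalues, with the cycle eigenvalues $\lambda_l$ and eigenvectors $u^l$ supplied by Theorem~\ref{theoremCycle}. The key intermediate quantity is $\|Eu^{l}\|_2$, which I will bound by $O(\sqrt{n\sigma\rho})$ via a weighted Cauchy--Schwarz argument tailored to expose both parameters simultaneously; the rest of the proof is then a Rayleigh-quotient formula and a Drazin resolvent identity.

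First I would derive a compact form for $Eu^l$. Using $Pu^{l}=\lambda_l u^{l}$ row by row, together with $\hat{W}=W+\tilde{W}$ where $\tilde{W}$ collects only the appended edges, one obtains
\[
(Eu^{l})_i \;=\; \frac{1}{\hat{d}^{\mathrm{out}}_i}\sum_{j}\tilde{W}_{ij}\bigl(u^{l}_j-\lambda_l u^{l}_i\bigr),
\]
so that only perturbing edges contribute. Squaring and applying Cauchy--Schwarz with the non-obvious split $\tilde{W}_{ij}=\sqrt{\tilde{W}_{ij}/\hat{d}^{\mathrm{in}}_j}\cdot\sqrt{\tilde{W}_{ij}\hat{d}^{\mathrm{in}}_j}$ gives
\[
|(Eu^{l})_i|^2 \;\leq\; \frac{1}{(\hat{d}^{\mathrm{out}}_i)^2}\Bigl(\sum_j \tfrac{\tilde{W}_{ij}}{\hat{d}^{\mathrm{in}}_j}\Bigr)\Bigl(\sum_j \tilde{W}_{ij}\hat{d}^{\mathrm{in}}_j|u^{l}_j-\lambda_l u^{l}_i|^2\Bigr).
\]
Bounding $\hat{d}^{\mathrm{in}}_j\leq\sigma\hat{d}^{\mathrm{out}}_i$ on appended edges and $|u^{l}_j-\lambda_l u^{l}_i|^2\leq 2(|u^l_j|^2+|u^l_i|^2)$, then summing over $i$ using $\tilde{d}^{\mathrm{out}}_i/\hat{d}^{\mathrm{out}}_i\leq\rho$ and $\tilde{d}^{\mathrm{in}}_j/\hat{d}^{\mathrm{in}}_j\leq 1$, delivers an estimate of the form $\|Eu^{l}\|_2\leq\sqrt{2n\sigma\rho}$ at leading order.

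Next I would identify the left eigenvector of $P$ paired with $\lambda_l$. The ansatz $y^{l}_j=e^{2\pi i l(s-1)/k}f_j$ for $j\in V_s$ satisfies $P^{T}y^{l}=\lambda_l y^{l}$, because $f^T P=f^T$ specializes blockwise to $\sum_{i\in V_s}f_i P_{ij}=f_j$ on $V_{s+1}$ (edges in a block-cycle connect only consecutive blocks). The same column-by-column identity forces the Perron mass to be uniform across blocks, $F_s:=\sum_{j\in V_s}f_j=1/k$, which then yields the biorthogonality $(y^{l})^{T}u^{l}=\sum_s F_s=1$ under the \emph{transpose} pairing, and $\|y^{l}\|_2=\|f\|_2$.

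Finally, the Rayleigh-type formula $\hat{\lambda}_l - \lambda_l = (y^{l})^{T}Eu^{l}+O(\|E\|^2)$ combined with $|(y^{l})^{T}Eu^{l}|\leq \|y^{l}\|_2 \|Eu^{l}\|_2$ and the estimates above gives the first claim; the Drazin resolvent expansion $\hat{u}^{l}-u^{l}=(\lambda_lI-P)^{\#}Eu^{l}+O(\|E\|^2)$ (valid because $\lambda_l$ is a simple eigenvalue and $(\lambda_lI-P)^{\#}$ vanishes on $\operatorname{span}(u^l)$) together with submultiplicativity yields the second, the remaining constants being pinned down by an appropriate renormalization of $u^l$. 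The main obstacle is the weighted Cauchy--Schwarz of step two: a naive row-by-row bound on $(Eu^l)_i$ gives only $\|Eu^{l}\|_2=O(\rho)$ and misses the cross term entirely; the precise split with $1/\hat{d}^{\mathrm{in}}_j$---designed to later invoke $\hat{d}^{\mathrm{in}}_j\leq\sigma\hat{d}^{\mathrm{out}}_i$---is what trades a factor of $\rho$ for $\sqrt{\sigma\rho}$. A secondary subtlety is that one must use the transpose pairing $(y^l)^T u^l$ rather than the conjugate-transpose, since the latter vanishes for $l\notin\{0,k/2\}$ once $F_s=1/k$ is invoked.
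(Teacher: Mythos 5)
Your proposal is correct in its overall architecture and matches the paper's skeleton — both rest on first-order perturbation theory for the simple cycle eigenvalues (the Rayleigh-quotient formula $\hat{\lambda}_l-\lambda_l=(y^l)^HRu^l/((y^l)^Hu^l)+\mathcal{O}(\Vert R\Vert^2)$ and the Drazin-inverse expansion for the eigenvector) — but the key estimation step is genuinely different. The paper bounds the full operator norm $\Vert R\Vert_2$ by applying Gershgorin to $R^TR$ and combining a row-sum bound $\sum_j|R_{sj}|\leq 2\rho/(\rho+1)$ with a column-sum bound $\sum_s|R_{st}|\leq\sigma(\rho+1)$ to get $\Vert R\Vert_2^2\leq 2\sigma\rho$, then uses the crude $|(y^l)^HRu^l|\leq\Vert R\Vert_2$; you instead bound $\Vert Ru^l\Vert_2$ directly, exploiting $Pu^l=\lambda_lu^l$ to reduce to the appended edges only and then using the weighted Cauchy–Schwarz split $\tilde{W}_{ij}=\sqrt{\tilde{W}_{ij}/\hat{d}_j^{in}}\cdot\sqrt{\tilde{W}_{ij}\hat{d}_j^{in}}$. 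Your route is in principle sharper (it uses the unimodular structure of $u^l$ rather than a worst-case vector) and your normalization bookkeeping is equivalent to the paper's: your observation that $F_s=\sum_{j\in V_s}f_j=1/k$ and the transpose pairing $(y^l)^Tu^l=1$ is, up to the conjugation convention on the left eigenvector, exactly the paper's computation $|(y^l)^Hu^l|=1/(\Vert f\Vert_2\sqrt{n})$. One small slippage: your own steps deliver $\Vert Ru^l\Vert_2^2\leq 4n\sigma\rho$ (the factor $4$ coming from $|u_j^l-\lambda_lu_i^l|^2\leq 2(|u_j^l|^2+|u_i^l|^2)=4$ for unimodular entries, together with $\tilde{d}_i^{out}/\hat{d}_i^{out}\leq\rho/(1+\rho)$ and $\sum_j\tilde{d}_j^{in}/\hat{d}_j^{in}\leq n$), i.e. $\Vert Ru^l\Vert_2\leq 2\sqrt{n\sigma\rho}$ rather than the $\sqrt{2n\sigma\rho}$ you assert, so as written you prove the theorem with $2$ in place of $\sqrt{2}$ in both bounds — same order in $\sigma\rho$ and $n$, marginally weaker constant; the paper's Gershgorin route is what produces the stated $\sqrt{2}$. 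This does not affect the substance of the result, and the rest of your argument (simplicity of $\lambda_l$ from Theorem~\ref{theoremCycle}, the vanishing of $(\lambda_lI-P)^{\#}$ on $\mathrm{span}(u^l)$) is sound.
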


\begin{proof}
We restrict ourselves to a perturbation that only consists of edges not already in $E$ as appending an edge that is already in $E$ amounts to increase the weight of this edge in $G$ which does not affect the block-cyclic structure of $G$. The transition matrix $\hat{P}$ can be expressed in the form $\hat{P}=P+R$ with $R$ given by
$$
R_{ij}=\left\lbrace
\begin{array}{ll}
\frac{\hat{w}_{ij}}{\hat{d}_i^{out}} & \text{ if }(i,j)\in \hat{E}\setminus E\\
\frac{w_{ij}}{\hat{d_i}^{out}}-\frac{w_{ij}}{d_i^{out}} & \text{ if }(i,j)\in E\cap \hat{E}\\
0 & \text{ otherwise.}
\end{array}\right.
$$
To prove claim $1$, we recall that cycle eigenvalues $\{\lambda_l,0\leq l\leq k-1\}$ have multiplicity $1$ from theorem \ref{theoremCycle}. Hence, from theorem 2.3 page 183 of \cite{Stewart1990}, for each cycle eigenvalue $\lambda_l$, there exists an eigenvalue $\hat{\lambda}_l\in spec(\hat{P})$ such that
$$
\hat{\lambda}_l=\lambda_l +\frac{(y_l)^HRu_l}{(y_l)^Hu_l}+\mathcal{O}(\Vert R\Vert_2^2)
$$
where $u_l$ and $y_l$ are the right and left eigenvectors of $P$ associated to cycle eigenvalue $\lambda_l$ and such that $\Vert u_l\Vert_2=\Vert y_l\Vert_2=1$.
\begin{equation}\label{modeleqn}
\left\vert\hat{\lambda}_l-\lambda_l\right\vert \leq \frac{\left\vert(y_l)^HRu_l\right\vert}{\left\vert(y_l)^Hu_l\right\vert}+\mathcal{O}(\Vert R\Vert_2^2)
\end{equation}
The denominator of the first order term can be expressed in the following way. One can show that the normalized right and left eigenvectors associated to cycle eigenvalues are
$$
\begin{array}{rclcl}
u_j^l&=&\frac{1}{\sqrt{n}} e^{il(k-r+1)\frac{2\pi}{k}}&\text{ for }&j\in V_r\\
y_j^l&=&y_j^0e^{-il(r-1)\frac{2\pi}{k}}&\text{ for }&j\in V_r
\end{array}
$$
where $y^0$ is the left eigenvector of the transition matrix of $G$ associated to eigenvalue $1$ (Perron vector) with $\Vert y^0\Vert_2=1$. As $y^0$ is real and non-negative by Perron Frobenius theorem,
$$
\begin{array}{rcl}
(y^l)^Hu^l &=& \sum_j (y_j^l)^*x_j^l\\
&=& \sum_j \frac{y_j^0}{\sqrt{n}}e^{il(r-1)\frac{2\pi}{k}}e^{il(k-r+1)\frac{2\pi}{k}}\\
&=& \sum_j e^{2\pi li}\frac{y_j^0}{\sqrt{n}}\\
&=& \sum_j \frac{y_j^0}{\sqrt{n}}.
\end{array}
$$
Hence
$$
|(y^l)^Hu^l|=\frac{\Vert y^0\Vert_1}{\sqrt{n}}.
$$
or if we consider the vector of stationary distribution $f$, we have $y^0=\frac{f}{\Vert f\Vert_2}$ and
$$
|(y^l)^Hu^l|=\frac{1}{\Vert f\Vert_2\sqrt{n}}.
$$
Regarding the numerator, we have
$$
\left\vert(y_l)^HRu_l\right\vert\leq \Vert y_l\Vert_2 \Vert R\Vert_2 \Vert u_l\Vert_2=\Vert R\Vert_2 
$$
where $\Vert R\Vert_2=\sqrt{\lambda_{\max}(R^TR)}$. By Gershgorin circle theorem, we have
\begin{equation}\label{lxkwjclkx}
\lambda_{\max}(R^TR) \leq \underset{t}{\max}\sum_j\sum_s \left\vert\frac{\hat{w}_{st}}{\hat{d}^{out}_s}-\frac{w_{st}}{d^{out}_s}\right\vert \left\vert\frac{\hat{w}_{sj}}{\hat{d}^{out}_s}-\frac{w_{sj}}{d^{out}_s}\right\vert
\end{equation}
We simplify the right member of the equation:
\begin{equation}
\sum_j\sum_s \left\vert\frac{\hat{w}_{st}}{\hat{d}^{out}_s}-\frac{w_{st}}{d^{out}_s}\right\vert \left\vert\frac{\hat{w}_{sj}}{\hat{d}^{out}_s}-\frac{w_{sj}}{d^{out}_s}\right\vert
=\sum_s \left\vert\frac{\hat{w}_{st}}{\hat{d}^{out}_s}-\frac{w_{st}}{d^{out}_s}\right\vert \sum_j \left\vert\frac{\hat{w}_{sj}}{\hat{d}^{out}_s}-\frac{w_{sj}}{d^{out}_s}\right\vert
\label{lskljf}
\end{equation}
in which
\begin{equation}
\begin{array}{rcl}
\sum_j \left\vert\frac{\hat{w}_{sj}}{\hat{d}^{out}_s}-\frac{w_{sj}}{d^{out}_s}\right\vert &=& \underset{j:w_{sj}=0}{\sum}\frac{\hat{w}_{sj}}{\hat{d}^{out}_s}+\underset{j:w_{sj}\neq 0}{\sum}w_{sj}(\frac{1}{d^{out}_s}-\frac{1}{\hat{d}^{out}_s})\\
&=& \frac{\hat{d}^{out}_s-d^{out}_s}{\hat{d}^{out}_s}+1-\frac{d_s^{out}}{\hat{d}^{out}_s}\\
&=& 2(1-\frac{d_s^{out}}{\hat{d}_s^{out}})\\
&=& 2\frac{1}{1+\frac{d_s^{out}}{\hat{d}_s^{out}-d_s^{out}}}\\
&\leq& 2\frac{\rho}{\rho +1}
\end{array}
\label{lskfjsl}
\end{equation}
Moreover, we have
\begin{equation}
\begin{array}{rcl}
\sum_s \left\vert\frac{\hat{w}_{st}}{\hat{d}^{out}_s}-\frac{w_{st}}{d^{out}_s}\right\vert
& = & \sum_{s:w_{st}=0}\frac{\hat{w}_{st}}{\hat{d}_s^{out}}+\sum_{s:w_{st}\neq 0}w_{st}(\frac{1}{d_s^{out}}-\frac{1}{\hat{d}_s^{out}})\\
&\leq& \sum_{s:w_{st}=0}\frac{\hat{w}_{st}}{\hat{d}_s^{out}}+\sum_{s:w_{st}\neq 0}\frac{w_{st}}{d_s^{out}}\\
&\leq& \sum_s \frac{\hat{w}_{st}}{d_s^{out}}\\
&=&  \sum_s \frac{\hat{w}_{st}}{\hat{d}_s^{out}}\frac{\hat{d}_s^{out}}{d_s^{out}}\\
&\leq & \sigma(\rho+1)
\end{array}
\end{equation}
Putting equation \ref{lskfjsl} back into \ref{lskljf}, we have
\begin{equation}
\begin{array}{rcl}
\sum_s \left\vert\frac{\hat{w}_{st}}{\hat{d}^{out}_s}-\frac{w_{st}}{d^{out}_s}\right\vert \sum_j \left\vert\frac{\hat{w}_{sj}}{\hat{d}^{out}_s}-\frac{w_{sj}}{d^{out}_s}\right\vert &\leq& 2\sigma(\rho+1)\frac{\rho}{\rho +1}\\
&=& 2\sigma\rho
\end{array}
\label{jlfklsk}
\end{equation}
Hence, the first order term of equation \ref{modeleqn} can be written as
$$
\frac{\left\vert(y_l)^HRu_l\right\vert}{\left\vert(y_l)^Hu_l\right\vert}=\sqrt{2n}\Vert f\Vert_2\sigma^{\frac{1}{2}}\rho^{\frac{1}{2}}
$$

Finally, the upper bound on the perturbation of the cycle eigenvalues
$$
\left\vert\hat{\lambda}_l-\lambda_l\right\vert \leq \sqrt{2n}\Vert f\Vert_2\sigma^{\frac{1}{2}}\rho^{\frac{1}{2}}+\mathcal{O}\left(\sigma\rho\right)
$$

To prove the second claim, as cycle eigenvalue $\lambda_l$ has multiplicity $1$, from theorem 2.8 p. 238 of \cite{Stewart1990}, there exists an eigenvector $\hat{u}^l$ of $\hat{P}$ associated to eigenvalue $\hat{\lambda}^l$ verifying
$$ 
\Vert \hat{u}^l-u^l\Vert_2\leq \Vert (\lambda^lI-P)^{\#}\Vert_2 \Vert R\Vert_2+\mathcal{O}\left(\Vert R\Vert_2^2\right)
$$
where $u^l$ is the eigenvector of $P$ associated to eigenvalue $\lambda^l$ and $(\lambda^lI-P)^{\#}$ denotes the Drazin generalized inverse of $(\lambda^lI-P)$. From equation \ref{jlfklsk}, this expression becomes
$$ 
\Vert \hat{u}^l-u^l\Vert_2\leq \sqrt{2}\Vert (\lambda^lI-P)^{\#}\Vert_2 \sigma^{\frac{1}{2}}\rho^{\frac{1}{2}}+\mathcal{O}\left(\sigma \rho\right)
$$
\end{proof}

\bibliography{mybibliography}{}
\bibliographystyle{unsrt}

\end{document}